\newcommand{\eat}[1]{}
\newtheorem{definition}{Definition}
\newtheorem{theorem}{Theorem}
\newtheorem{corollary}{Corollary}
\newtheorem{lemma}{Lemma}
\newtheorem{problem}{Problem}
\DeclarePairedDelimiter\abs{\lvert}{\rvert}
\DeclareMathOperator*{\argmax}{arg\,max}
\newcommand{\NP}{\ensuremath{\mathbf{NP}}\xspace}
\newcommand{\NPhard}{\NP-hard\xspace}
\newcommand{\SP}{\ensuremath{\#\mathbf{P}}\xspace}
\newcommand{\SPhard}{\SP-hard\xspace}
\newcommand{\spara}[1]{\smallskip\noindent{\bf #1}}
\newcommand{\squishlist}{
 \begin{list}{$\bullet$}
  {  \setlength{\itemsep}{0pt}
     \setlength{\parsep}{3pt}
     \setlength{\topsep}{3pt}
     \setlength{\partopsep}{0pt}
     \setlength{\leftmargin}{2em}
     \setlength{\labelwidth}{1.5em}
     \setlength{\labelsep}{0.5em}
} }
\newcommand{\squishlisttight}{
 \begin{list}{$\bullet$}
  { \setlength{\itemsep}{0pt}
    \setlength{\parsep}{0pt}
    \setlength{\topsep}{0pt}
    \setlength{\partopsep}{0pt}
    \setlength{\leftmargin}{2em}
    \setlength{\labelwidth}{1.5em}
    \setlength{\labelsep}{0.5em}
} }
\newcommand{\squishdesc}{
 \begin{list}{}
  {  \setlength{\itemsep}{0pt}
     \setlength{\parsep}{3pt}
     \setlength{\topsep}{3pt}
     \setlength{\partopsep}{0pt}
     \setlength{\leftmargin}{1em}
     \setlength{\labelwidth}{1.5em}
     \setlength{\labelsep}{0.5em}
} }
\newcommand{\squishend}{
  \end{list}
}
\newcommand{\naturals}{\ensuremath{\mathbb{N}}\xspace}
\newcommand{\calE}{\ensuremath{\mathcal{E}}\xspace}
\newcommand{\calF}{\ensuremath{\mathcal{F}}\xspace}
\newcommand{\calM}{\ensuremath{\mathcal{M}}\xspace}
\newcommand{\probG}{\ensuremath{\mathcal{G}}\xspace}
\newcommand{\multiG}{\ensuremath{\widetilde{G}}\xspace}
\newcommand{\multiE}{\ensuremath{\widetilde{E}}\xspace}
\newcommand{\multiP}{\ensuremath{\widetilde{p}}\xspace}
\newcommand{\multiR}{\ensuremath{\widetilde{R}}\xspace}
\newcommand{\sampleR}{\ensuremath{\mathcal{\multiR}}\xspace}
\newcommand{\prob}{\ensuremath{\mathrm{Pr}}\xspace}
\newcommand{\Exp}{\ensuremath{\mathbb{E}}\xspace}
\newcommand{\Var}{\ensuremath{\mathrm{Var}}\xspace}
\newcommand{\p}{\ensuremath{\mathrm{path}}\xspace}
\newcommand{\bigO}{\ensuremath{\mathcal{O}}\xspace}
\newcommand{\xdivscore}[1]{{\Exp}\left[F(#1)\right]} 
\newcommand{\xgaindivscore}[2]{{\Exp}\left[F(#1 \mid #2)\right]} 
\newcommand{\greedyA}{\ensuremath{A_{G}}\xspace} 
\newcommand{\optA}{\ensuremath{{A^{*}}}\xspace} 
\newcommand{\optvalue}{\ensuremath{{\mathrm{OPT}}}\xspace} 
\newcommand{\approxgrA}{\ensuremath{\tilde{A}}\xspace} 
\newcommand{\LB}{\ensuremath{\mathrm{LB}}\xspace}
\newcommand{\LBnaive}{\ensuremath{\mathrm{LB}_0}\xspace}
\newcommand{\rcGreedy}{\ensuremath{\mathrm{RC\text{-}Greedy}}\xspace}
\newcommand{\fastAlgo}{\textsc{\large tdem}\xspace}
\newcommand{\fastAlgoLong}{{Two-phase Di\-ver\-sity Ex\-po\-sure Ma\-xi\-mi\-za\-tion}\xspace}
\newcommand{\tim}{\textsc{\large tim}\xspace}
\newcommand{\imm}{\textsc{\large imm}\xspace}
\newcommand{\rr}{\textsc{\large rr}\xspace}
\newcommand{\rc}{\textsc{\large rc}\xspace}
\newcommand{\algDClose}{\textsc{Min-Var}\xspace}
\newcommand{\algDFar}{\textsc{Max-Var}\xspace}
\newcommand{\algDWeighted}{\textsc{Myopic}\xspace}
\newcommand{\algTdem}{\textsc{\large tdem}\xspace}
\newcommand{\dataset}[1]{\texttt{#1}\xspace}
\newcommand{\dtstGroup}[1]{\dataset{#1}}
\newcommand{\dtstSub}[2]{\dataset{#1:#2}}
\newcommand{\dblp}{DBLP}
\newcommand{\nips}{G}
\newcommand{\twitter}{Twitt}
\newcommand{\uste}{Tweet}
\newcommand{\usep}{TPair}
\newcommand{\dtstClDBLP}{\dtstGroup{\dblp}}
\newcommand{\dtstDBLPbs}{\dtstSub{\dblp}{BSch}} 
\newcommand{\dtstDBLPpy}{\dtstSub{\dblp}{PYu}} 
\newcommand{\dtstDBLPcp}{\dtstSub{\dblp}{CPap}} 
\newcommand{\dtstClNIPS}{\dtstGroup{\nips}}
\newcommand{\dtstObamacare}{\dtstSub{\nips}{ObamaC}} 
\newcommand{\dtstPhone}{\dtstSub{\nips}{IPhone}} 
\newcommand{\dtstUselect}{\dtstSub{\nips}{US-elect}} 
\newcommand{\dtstBrexit}{\dtstSub{\nips}{Brexit}} 
\newcommand{\dtstFrack}{\dtstSub{\nips}{Fracking}} 
\newcommand{\dtstAbort}{\dtstSub{\nips}{Abortion}} 
\newcommand{\dtstClTWT}{\dtstGroup{\twitter}}
\newcommand{\dtstTWTXL}{\dtstSub{\twitter}{XL}} 
\newcommand{\dtstTWTS}{\dtstSub{\twitter}{Follow}} 
\newcommand{\dtstClTWTE}{\dtstGroup{\uste}}
\newcommand{\dtstTWTEMFive}{\dtstSub{\uste}{M5}} 
\newcommand{\dtstTWTESFive}{\dtstSub{\uste}{S5}} 
\newcommand{\dtstTWTESTwo}{\dtstSub{\uste}{S2}} 
\newcommand{\dtstClTWEP}{\dtstGroup{\usep}}
\newcommand{\dtstTWEPa}{\dtstSub{\usep}{Z}} 
\newcommand{\dtstTWEPb}{\dtstSub{\usep}{Y}} 
\newcommand{\dtstTWEPc}{\dtstSub{\usep}{X}} 
\DeclareMathOperator{\avg}{avg}
\newcommand{\nbN}{\ensuremath{n}\xspace}
\newcommand{\nbE}{\ensuremath{m}\xspace}
\newcommand{\density}{\ensuremath{d(G)}\xspace}
\newcommand{\score}{\ensuremath{F(A)}}
\NewDocumentCommand { \calcnum } { O{} m }
{ \num [  round-mode=places , round-precision=3 , group-separator={,}, group-minimum-digits=4, #1] { \fp_to_decimal:n {#2} } }
\title{Maximizing the Diversity of Exposure\\ in a Social Network}
\author{Antonis~Matakos, Cigdem~Aslay, Esther~Galbrun, and Aristides~Gionis%
\IEEEcompsocitemizethanks{
\IEEEcompsocthanksitem A.~Matakos is with the Department of Computer Science, Aalto University, Finland.\protect\\
E-mail: firstname.lastname@aalto.fi%
\IEEEcompsocthanksitem C.~Aslay is with Department of Computer Science, Aarhus University, Denmark.\protect\\
E-mail: cigdem@cs.au.dk%
\IEEEcompsocthanksitem E.~Galbrun is with the School of Computing, University of Eastern Finland.\protect\\
E-mail: esther.galbrun@uef.fi%
\IEEEcompsocthanksitem A.~Gionis is with the Department of Computer Science, KTH Royal Institute of Technology, Sweden, 
and the Department of Computer Science, Aalto University, Finland.\protect\\
E-mail: argioni@kth.se\protect
}
\thanks{Part of this work was done while the authors were with Aalto University, in Finland. This research is supported by the Academy of Finland projects AIDA (317085) and MLDB (325117),
the ERC Advanced Grant REBOUND (834862), 
the EC H2020 RIA project SoBigData (871042), and 
the Wallenberg AI, Autonomous Systems and Software Program (WASP) 
funded by the Knut and Alice Wallenberg Foundation.
\protect}}
\begin{document}


\IEEEtitleabstractindextext{%
\begin{abstract}
Social-media platforms have created new ways for citizens to stay informed and participate in public debates.
However, to enable a healthy environment for information sharing, social deliberation, and opinion formation, citizens need to be exposed to sufficiently diverse viewpoints that challenge their assumptions, instead of being trapped inside filter bubbles.
In this paper, we take a step in this direction and propose a novel approach to maximize the diversity of exposure in a social network.
We formulate the problem in the context of information prop\-a\-ga\-tion,
as a task of recommending a small number of news articles to selected users.
In the proposed setting, we take into account content and user leanings,
and the probability of further sharing an article.
Our model allows to capture the balance between max\-i\-miz\-ing the spread of information and ensuring the exposure of users to diverse viewpoints.

The resulting problem can be cast as maximizing a monotone and submodular function, subject to a matroid constraint on the allocation of articles to users. It is a challenging generalization of the influence-maximization problem. Yet, we are able to devise scalable approximation algorithms by introducing a novel extension to the notion of random reverse-reachable sets. We experimentally demonstrate the efficiency and scalability of our algorithm on several real-world datasets.
\end{abstract}

\begin{IEEEkeywords}
Information propagation, diversity maximization, filter bubbles, social influence
\end{IEEEkeywords}
}

\maketitle

\section{Introduction}
\label{sec:intro}
Over the past decade, the emergence of social-media platforms has changed society in unprecedented ways, completely altering the landscape of societal debates and creating radically new ways of collective action. In this networked public sphere, members of society have access to a public podium where they can participate in public debate and speak up about topics they deem to be of public concern. This emerging environment of participatory culture has made the diversity of citizens' views more relevant than ever before.

While having the potential to expose individuals to diverse opinions, social-media platforms typically resort to personalization algorithms that filter content based on social connections and previously expressed opinions, creating filter bubbles~\cite{pariser11filter}.
The resulting echo chambers tend to amplify and reinforce pre-existing opinions, catalyzing an environment that has a corrosive effect on the democratic debate.

In this paper we propose \emph{a novel approach towards breaking filter bubbles}.  We consider social-media discussions around a topic that are characterized by a number of viewpoints falling within a predefined spectrum of opinions. To accurately model the dynamics of social-media platforms, we assume that each viewpoint is represented by a number of items (articles, posts) propagating through the network, via messages, re-shares, retweets, etc. Furthermore, we assume that each individual is associated with a \textit{leaning} with respect to the issue, which impacts whether they will further disseminate any article they come across, depending on how it aligns with their leaning. We think that this is a realistic assumption, since, for example, an individual with conservative leaning will be reluctant to share an article with liberal leaning.

We refer to the diversity of the information that a user is exposed to as the user's ``\emph{diversity exposure level}''. It depends on the viewpoint expressed in the articles the user consumes, referred to as \emph{article leanings}, and the users' existing viewpoint on the matter, referred to as \emph{user leanings}. We assume that the diversity exposure level of users can be increased through \emph{content recommendations} made by the social-media platform. Considering that filter bubbles result from a lack of exposure to diverse viewpoints, our aim is to measure and maximize the total diversity exposure levels of all users in the network. 

Our problem can be naturally defined in an \emph{information-pro\-pa\-ga\-tion setting}~\cite{kempe03}: we ask to select a small number of seed users and the articles that should be recommended to them so as to maximize the total diversity of exposure in the network. Since the recommended articles are inserted into the timeline of the users, disrupting the organic flow of the content in the network, we also consider a limit on the number of articles that can be recommended to a user in this way.  

An attractive aspect of our problem setting is that it conso\-li\-da\-tes many aspects of the functionality of real-life social net\-works. By incorporating article leanings, user leanings, and the probabilities of further sharing an article, we ask to find the recommendations that translate to a good spread and simul\-taneously maximize the diversity exposure level of the users. To better understand the interplay between spread and diversity, observe that assigning articles that match the users' predisposition is likely to result in a high spread but minimal increase of diversity, while recommending articles that are opposed to users' predispositions, will likely result in high diversity locally but hinder the spread of the articles. This trade-off is central to the diversity-maximization problem we consider.

We show that taking all the aforementioned components into account, the problem of maximizing the diversity of exposure in a social network can be cast as maximizing a monotone and submodular function subject to a matroid constraint on the allocation of articles to users. We show that this problem is \NP-hard and is far more challenging than the classical influence-maximization problem. 
We introduce a non-trivial generalization of random reverse-reachable sets ({\rr}-sets)~\cite{borgs14}, which we call random \emph{reverse co-exposure} sets ({\rc}-sets), for accurately estimating the diversity of exposure in a social network. We propose a scalable approximation algorithm, named \emph{\fastAlgoLong} (\fastAlgo), that leverages random {\rc}-sets and an adaptive sample size determination procedure, ensuring  quality guarantee on the returned solution with high probability. 

Although our approach belongs to a large body of work on information propagation and breaking filter bubbles, there are significant differences and novelties. In particular:

\smallskip
\begin{itemize}
\item
We are the first to address the problem of maximizing the diversity of exposure and breaking filter bubbles in an item-aware information propagation setting. We leverage several real-world aspects of social-media functionality, such as how users consume and share articles, while considering user-article dependent propagation probabilities.
\item 
We formally define the problem of maximizing the diversity of exposure, prove its hardness, and develop a simple greedy algorithm. 
\item
We then introduce the notion of random {\em reverse co-exposure sets} and devise a scalable instantiation of the greedy algorithm with provable guarantees. 
\item
Our extensive experimentation on real-world datasets confirms that our algorithm is scalable and delivers high quality solutions, significantly outperforming several natural baselines. 
\end{itemize}

A preliminary version of our work provided a first theoretical and experimental treatment of the problem under a simpler formulation~\cite{aslay2018maximizing}. Specifically, we previously defined the diversity exposure level of a user to be equal to the breadth of leanings spanned by the items the user is exposed to, in addition to the user's own leaning.  In this paper, we extend our preliminary results in several directions. First, we propose a refined scheme to quantify the diversity exposure level of a user. The new diversity definition measures not only the range of leanings in a set of items but also their spread within this range. That is, our refined scores does not only look at the extremes of represented leanings but also at how well intermediate leanings are covered. Second, we show that the total diversity exposure function remains submodular and monotone and we extend our scalable approximation framework based on random reverse co-exposure sets to operate under this new score. Finally, we provide additional experiments on many real-world datasets.

\section{Related work}
\label{sec:related}

Our work relates to the emerging line of research on breaking filter bubbles in social media. To the best of our knowledge, this is the first work to approach this problem from the angle of maximizing the diversity of information exposure in an item-aware independent-cascade model.

\smallskip
\noindent
\textbf{Filter bubbles and echo chambers.}
Recently, there have been a number of studies on the effects of ``echo chambers''~\cite{bakshy15exposure,garrett09echo}, where users are only exposed to information from like-minded individuals, and of ``filter bubbles''~\cite{bakshy15exposure,pariser11filter}, where algorithms only present personalized content that agrees with the user's viewpoint. In particular, Garrett et al.~\cite{garrett09echo} observed that news stories containing opinion-challenging informations spread less than other news.

In order to measure how strongly these phenomena manifest themselves on social media, a significant body of work has emerged that focuses on measures for characterizing polarization~\cite{Matakos17,akoglu14,conover11political,garimella16quantifying,guerra13measure,xi2018quantifying}.

In a similar vein to ours, previous works have studied the problem of diversifying exposure. This task presents various aspects, such as the questions of \emph{who} to target, \emph{what} viewpoints to promote, or \emph{how} best to present possibly opposing viewpoints to users~\cite{Liao2014}.
Recent approaches focus on targeting users so as to reduce the polarization of opinions and bridge opposing views~\cite{Matakos17, Musco17,garimella17reducing, xi2018quantifying}. These works consider an \emph{opinion-formation} model whereas our underlying model is an \emph{influence-propagation} model. 
From this angle, the works by Garimella et al.~\cite{garimella17balancing} and Rawal and Khan 
\cite{rawal2019maximizing} are closest to our work. They consider an influence propagation setting, where two conflicting campaigns propagate in the network and aim to maximize the number of users exposed to both campaigns. However, the granularity of our setting is finer since we consider items with leanings lying across a spectrum rather than two opposing sides. Additionally, we consider the leanings of users, which affect the propagation probabilities. Since our goal is to identify assignments of items to users, we aim to identify both the users to target and the viewpoints to expose them to. 

\smallskip
\noindent
\textbf{Influence maximization.}
Our problem is also related to the work on influence maximization.
Kempe et al.~\cite{kempe03} formalized the influence maximization problem and proposed two propagation models, the \emph{independent-cascade model} and the \emph{linear-threshold model}. These models were subsequently extended to handle the case of multiple competing campaigns in a network~\cite{Bharathi2007competitive,Borodin2010,Valera15}.
As other authors have suggested, we consider a \emph{central authority} selecting the seed set~\cite{Borodin2017, Aslay2015Viral, Aslay2017revenue, garimella17balancing}. Our setting is related to social advertising~\cite{Aslay2015Viral, Aslay2017revenue}, which also considers item-aware propagation models, aiming to allocate ads so as to maximize the engagement of users.
Key to our work is the idea of \emph{reverse reachable sets} introduced by~\cite{borgs14}, which provides scalable solutions for the influence maximization problem. Subsequent works~\cite{Cohen2014sketch,NguyenTD16,tang2015influence,tang14} introduced techniques to improve upon this idea even further. We extend these ideas to our setting, and obtain an algorithm that scales to very large datasets.

\section{Problem definition}
\label{sec:problem}
\spara{Notation.}
The input to the problem of diversifying exposure consists of the following ingredients:
(i) a directed social graph $G=(V,E)$,
with $\lvert V \rvert = n$ nodes and $\lvert E \rvert = m$ edges, where nodes represent users and a directed edge $(u,v)$ indicates that user $v$ follows user $u$,
thus, $v$ can see and propagate posts by $u$;
(ii) a set $H$ of (news) items on a (possibly controversial) topic, with $\lvert H \rvert = h$;
(iii) item-specific propagation probabilities $p^i_{uv}$,
for all items $i \in H$ and edges $(u,v) \in E$,
where $p^i_{uv}$ represents the probability that item~$i$ will propagate from user $u$ to user $v$;
(iv) a \emph{leaning function} $\ell: V \cup H \rightarrow [-1,1]$
that quantifies the polarity of the viewpoints of items and users with respect to the considered issue or topic.

\spara{Cascade model.} We assume that the propagation of an item $i \in H$
from user $u$ to user $v$ follows the \emph{independent-cascade model}
with parameter $p^i_{uv}$, and is independent from the propagation of other items to $v$ from its in-neighbors.
Thus, once $u$ becomes active on item~$i$ at time $t$,
the probability $p^i_{uv}$ that $u$ succeeds in activating $v$ with item $i$ at time $t+1$ is independent 
of other items with which user $u$ or other in-neighbors of $v$ might succeeded to activate $v$ at any time. 
We incorporate the different tendency of users to share items
with leanings diverging from or similar to their own by allowing item-specific propagation probabilities for each edge.
Hence, $p^i_{uv}$ implicitly takes into account the leanings of users $u$ and $v$ and of item~$i$.
The leaning of a user reflects the user's viewpoint, which is considered to be stable. 
Therefore, we assume that the propagation probabilities remain fixed over time, and probabilities $p^i_{uv}$ are constant values input to our cascade model. We consider the estimation of user leanings and propagation probabilities as orthogonal to our work.

\spara{Quantifying diversity of exposure.}
We say that user $v$ is \emph{exposed} to item~$i$ if $v$ is \emph{activated} on item~$i$,
either by an in-neighbor that is active on item~$i$, or due to being a \emph{seed node} for item~$i$.
Consider a user $v$ that is exposed to a set $I\subseteq H$ of items.
It follows that user $v$ is exposed to a set of leanings $\{\ell(v)\} \cup \{\ell(i) : i \in I\}$. Intuitively, we want each user to be aware of a multitude of viewpoints, while also retaining a balanced perspective. To account for 
both factors, we define a \emph{penalty} function that quantifies the lack of diversity of exposure. 

Specifically, we want to penalize large gaps in the spread of leanings, which correspond to ranges of opinions not represented among items the user is exposed to. Therefore, the function is defined for each user by considering the set of distinct leanings he is exposed to, sorted by polarity, and taking the sum of squared distances between consecutive leanings, also accounting for the extreme values of leanings. We consider that each item contributes only once to the diversity of exposure of a user. Therefore seeing the same article multiple times should have no impact on the objective.

We let $L(v,I) = \langle \ell_1, \dots, \ell_{\eta} \rangle$ denote the set $\{\ell(v)\} \cup \{\ell(i) : i \in I\}  \cup \{-1, 1\}$ sorted by increasing values, i.e., such that $\ell_i \leq \ell_j$ for all $i < j$.
This set contains the distinct leanings among the items in $I$ that user $v$ has been exposed to, as well as the two extreme leanings across the spectrum of opinions, $\ell_1=-1$ and $\ell_{\eta}=1$.
Then, we define the penalty for node $v$, $g_v: 2^{H} \rightarrow [0,4]$, as 
\begin{align}
	g_v(I) =\sum_{j=1}^{\eta-1} (\ell_{j+1} - \ell_{j} )^2, \quad \ell \in L(v,I).
\end{align}
Given the penalty function $g_v(I)$ that quantifies the lack of diversity in the leanings of the items $I$ that $v$ is exposed to, we define the \emph{level of diversity exposure} $f_v: 2^{H} \rightarrow [0,1]$ of $v$ as
\begin{align}
	f_v(I) = 1-\frac 14 g_v(I).
\end{align}
Notice that the range of the diversity exposure function $f_v$ is $[0,1]$, where a value of $1$ corresponds to the maximum possible diversity of exposure. 

To motivate the definition of our diversity function $f_v$ we provide the following two lemmas, 
which illustrate some of its desirable properties. 

\begin{lemma}
\label{lemma:monotonicity}
For all $I\subseteq J \subseteq H$, we have $f_v(I) \le f_v(J)$.
\end{lemma}

Lemma~\ref{lemma:monotonicity}, 
for which the proof is provided as part of Lemma~\ref{lemma:submodularity},
states that $f_v$ is {\em monotone}, i.e., the diversity exposure level of $v$ cannot decrease as the user is exposed to more items. 

Next we formally show that, if we fix the number of items that user $v$ will see, then the configuration in which $f_v$ is maximized corresponds to the \emph{desired} scenario where the user leaning of $v$ and the leanings of the items $v$ is exposed to are equally spaced across $[-1,1]$.

\begin{lemma}
\label{lemma:uniformity}
Consider a set of items $I$, so that $I$ has fixed cardinality $\kappa$. 
Then, the diversity function $f_v(I)$ is maximized if 
the leanings of the items in $I$ are equidistantly positioned in the interval $[-1,1]$.
\end{lemma}

\begin{proof}
For the sake of simplicity, and without loss of generality, assume that neither the leaning of $v$ nor the extreme leanings $-1$ and $1$ are represented in $I$, so that $|L(v,I)| = \kappa+3$. Let $r_j= \ell_{j+1}-\ell_j$, for $j = 1,\ldots,\kappa+2$. 
Notice that $\sum_{j=1}^{\kappa+2} r_j$ is a constant that depends only on how we model the range of the leanings, i.e., for $[-1,1]$, we have $\sum_{j=1}^{\kappa+2} r_j = 2$. Remember that by definition $f_v(I) $ is maximized  whenever $g_v(I)$ is minimized. Then, solving the equations resulting from $g_v'(I) = 0$ and $\sum_{j=1}^{\kappa+2} r_j = 2$, we see that $g_v(I)$ attends its minimum value when \[r_1=\ldots=r_{\kappa+2}=\frac{2}{\kappa+2}\;.\]
\end{proof}

\spara{Assignment to seed nodes.}
We consider selecting a set of users in $V$ as the \emph{seed nodes} 
and expose them to a subset of items from $H$.
Let $\calE = V \times H$ denote the set of all possible (user, item)
pairs and let $A \subseteq \mathcal{E}$ denote an assignment such that the set $A_i = \{u\in V: (u,i) \in A \}$ contains the seed nodes selected for initial exposure to item~$i$ and the set $A_u = \{i\in H: (u,i) \in A \}$ contains the items assigned to seed node $u$.
For each $v \in V$, we denote by $I_v(A)$ the set of items that $v$ is exposed to when the propagation process started from assignment $A$ converges. The \emph{diversity of exposure score} $F(A)$ of an assignment $A$
is then defined as the sum of diversity exposure levels
of all the users resulting from the assignment $A$ in $G$
\begin{align}
F(A) = \sum_{v \in V} f_v\big(I_v(A)\big).
\end{align}
Note that the function 
$f_v(I_v): 2^{\mathcal{E}} \rightarrow [0,1]$
is a composition
$f_v(I_v) = f_v \circ I_v$
of the functions
$I_v: 2^{\mathcal{E}} \rightarrow 2^{H}$ and
$f_v:  2^{H} \rightarrow [0,1]$.
We will later use this fact to show that $f_v(I_v)$
is a submodular function over \calE.

\spara{Constraints on assignments.}
We assume that we are interested in assignments of size at most $k \in \naturals$.
Moreover, taking into account the limited attention bound of users,
which can be user-specific~\cite{linHWY14},
we also limit the number of items that a user can be seeded with.\!\footnote{As in previous work \cite{Aslay2015Viral}, 
we do not assume any attention bound on the number of items that are not recommendations in our problem definition,
i.e., items that appear in the news-feed of the users in the social network,
as such items are part of the organic operation of the network.} 
We model this using an \emph{attention bound constraint}
$k_u \in \naturals$ for each user $u \in V$.
We say that an assignment $A$ is feasible if
$\lvert A \rvert \le k$ and $\lvert A_u \rvert \le k_u$, for each seed node $u$.

\spara{Assumptions.}
We assume that there exist $e,e' \in V \cup H$
such that $\ell(e) \neq \ell(e')$.
This weak assumption is simply a bare minimum requirement
on the diversity of the leanings of the users and items,
aligned with the motivation of the problem.
We will use this assumption in the greedy approximation analysis
to constrain the optimal values of expected diversity exposure score to $\mathbb{R}_{+}$.

We are now ready to formally define our problem.

\begin{problem}[Diversity Exposure Maximization]
\label{pr:diversityMax}
Given a directed social graph $G = (V,E)$ with user leanings $\ell(v)$, for all $v \in V$,
a set of items $H$ with item leanings $\ell(i)$, for all $i \in H$,
item-specific propagation probabilities $p^i_{uv}$, for all $(u,v) \in E$ and all $i \in H$,
positive integers $k_u$ for the attention bound constraints, for all $u \in V$, and
a positive integer $k$, find a feasible assignment $A$
that maximizes the expected diversity exposure score
\begin{equation*}
\begin{aligned}
& \underset{A \subseteq \mathcal{E}}{\text{\em maximize}}
& & \xdivscore{A} \\
& \text{\em subject to}
& & \lvert A \rvert \le k, \\
&&& \lvert A_u \rvert \le k_u, \,\text{\em  for all } u \in V.
\end{aligned}
\end{equation*}
\end{problem}
We use $\optA$ to denote the optimal solution of Problem~\ref{pr:diversityMax}, and
$\optvalue = \xdivscore{\optA}$ to denote its expected score in $G$.

\section{Theoretical analysis}
\label{sec:theory}
\subsection{Possible-world semantics}
{A \emph{probabilistic graph} $\probG = (V,E,p)$, comprises a vertex set $V$ and an edge set $E$, where each edge $e$ is associated with a probability $p_e\in p$.} Given a probabilistic graph,
a \emph{possible world} is a \emph{deterministic} graph obtained from $\probG$
with edges sampled independently according to $p$.
We now introduce the \emph{possible-world model} for our problem
that can capture the co-exposure of nodes to items resulting from any given assignment.

We start by defining a \emph{directed edge-colored multigraph}
$\multiG = (V, \multiE, \multiP)$ from $G = (V,E)$,
by creating $h$ copies of each directed edge $(u,v) \in E$.
For each item $i \in H$ we create a parallel edge $(u,v)_i$ in $\multiG$,
having distinct color and associated probability $p^i_{uv}$.
We interpret $\multiG$ as a probability distribution over all subgraphs of $(V, \multiE)$,
i.e., we sample each edge $(u,v)_i \in \multiE$ independently at random with probability $p_{uv}^i$.
The probability of a possible world $g \sqsubseteq \multiG$ is given by
\begin{align}
\prob[g] = \prod_{i \in H} \prod_{(u,v)_i \in g} p^i_{uv} \prod_{(u,v)_i \in \multiE \setminus g} (1 - p^i_{uv}).
\end{align}

Let $\p_g^i(u,v)$ denote an indicator variable that equals $1$
if node $v \in V$ is reachable by node $u$ via the colored edges of $i$ in $g$,
and $0$ otherwise.
We say that a pair $(u,i)$ can \emph{color-reach} node $v$
if $\p_g^i(u,v) = 1$.
For an assignment $A$ and a node $v\in V$ let $I_v^g(A)$ be the set of items
that $v$ is exposed to, due to $A$, in network $g$.
It can be written as
\begin{align*}
I_v^g(A) = \{i \in H \mid \text{ exists } (u,i) \in A \text{ and } \p_g^i(u,v) = 1\}.
\end{align*}
The value of the objective $\xdivscore{A}$
in Problem~\ref{pr:diversityMax} is given by
\begin{eqnarray}
\label{eq:objPossibWorld}
\xdivscore{A} &=& {\Exp}\left[\sum_{v \in V} f_v(I_v^g(A)) \right] \nonumber\\
&=& \sum_{g \sqsubseteq \multiG} \prob[g] \sum_{v \in V} f_v(I_v^g(A)).
\end{eqnarray}

\subsection{Hardness and approximation}

We will first show that the objective function of  Problem~\ref{pr:diversityMax} is monotone and submodular.

\begin{lemma}
\label{lemma:submodularity}
The function
$\xdivscore{\cdot}$ is monotone and submodular.
\end{lemma}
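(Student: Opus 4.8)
The plan is to exploit the compositional decomposition highlighted in the problem setup. By linearity of expectation and Equation~\eqref{eq:objPossibWorld}, $\xdivscore{A} = \sum_{g \sqsubseteq \multiG} \prob[g] \sum_{v \in V} f_v(I_v^g(A))$ is a non-negative linear combination (indeed a convex combination over worlds, summed over nodes) of the per-world, per-node functions $A \mapsto f_v(I_v^g(A))$. Since both monotonicity and submodularity are preserved under non-negative linear combinations, it suffices to fix an arbitrary possible world $g$ and node $v$ and show that $A \mapsto f_v(I_v^g(A))$ is monotone and submodular on $2^{\calE}$. I would then analyze the two building blocks $f_v \colon 2^{H} \to [0,2]$ and $I_v^g \colon 2^{\calE} \to 2^{H}$ separately and recombine them.

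First I would verify that $f_v$ is monotone and submodular as a set function on $2^{H}$. Writing $f_v(I) = M(I) - m(I)$ with $M(I) = \max L(v,I)$ and $m(I) = \min L(v,I)$, where the anchor $\ell(v)$ always lies in $L(v,I)$, I would note that the marginal gain of adding item $j$ to $M$ is $\max(0, \ell(j) - M(I))$, which shrinks as $I$ grows; hence $M$ is monotone nondecreasing and submodular. Dually, the marginal change of $m$ is $\min(0, \ell(j) - m(I))$, which is nonpositive and increases as $I$ grows, so $m$ is monotone nonincreasing and supermodular, and thus $-m$ is monotone nondecreasing and submodular. Therefore $f_v = M + (-m)$ is monotone and submodular.

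Next I would record the coverage structure of $I_v^g$. For a single pair $e = (u,i)$ one has $I_v^g(\{e\}) = \{i\}$ when $\p_g^i(u,v) = 1$ and $I_v^g(\{e\}) = \emptyset$ otherwise, so $I_v^g(A) = \bigcup_{e \in A} I_v^g(\{e\})$, i.e.\ $I_v^g$ is a union of per-element item sets, exactly the structure of a coverage map. The crux is then a composition lemma: if $\phi \colon 2^{H} \to \reals$ is monotone submodular and $S_e \subseteq H$ are fixed sets indexed by $e \in \calE$, then $A \mapsto \phi\left(\bigcup_{e \in A} S_e\right)$ is monotone submodular on $2^{\calE}$. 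Monotonicity follows from monotonicity of the union and of $\phi$. For submodularity, given $A \subseteq B$ and $e \notin B$, I set $U_A = \bigcup_{e' \in A} S_{e'} \subseteq U_B = \bigcup_{e' \in B} S_{e'}$ and apply the general submodularity inequality $\phi(X) + \phi(Y) \geq \phi(X \cup Y) + \phi(X \cap Y)$ with $X = U_A \cup S_e$ and $Y = U_B$. Here $X \cup Y = U_B \cup S_e$ and $X \cap Y = U_A \cup (S_e \cap U_B) \supseteq U_A$, so monotonicity of $\phi$ gives $\phi(U_A \cup S_e) - \phi(U_A) \geq \phi(U_B \cup S_e) - \phi(U_B)$, which is precisely the diminishing-returns condition for the composed function.

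Applying this lemma with $\phi = f_v$ and $S_e = I_v^g(\{e\})$ shows $A \mapsto f_v(I_v^g(A))$ is monotone submodular, and the reduction in the first paragraph then yields the claim for $\xdivscore{\cdot}$. I expect the composition step to be the main obstacle: the subtlety is that adding one pair $e$ to $A$ can enlarge the exposed item set by a whole set $S_e$ rather than a single item, so the single-element diminishing-returns property of $f_v$ does not apply directly and must be lifted to the set-difference form via the $\phi(X) + \phi(Y) \geq \phi(X \cup Y) + \phi(X \cap Y)$ characterization together with monotonicity.
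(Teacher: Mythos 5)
Your proof is correct, and its core differs genuinely from the paper's. Both arguments begin with the same reduction: by Equation~(\ref{eq:objPossibWorld}) and closure of monotone submodular functions under non-negative linear combinations, it suffices to fix a world $g$ and a node $v$ and show that $A \mapsto f_v(I_v^g(A))$ is monotone and submodular on $2^{\calE}$. From there the paper works directly on the composed function: it proves the diminishing-returns inequality by an explicit case analysis on where $\ell(j)$ falls relative to $\min L(v,I_v^g(A))$ and $\max L(v,I_v^g(A))$, with separate (partly omitted) treatment of the cases where $I_v^g(A)$ or $I_v^g(B)$ is empty. You instead factor the problem: you first show $f_v$ itself is monotone submodular on $2^{H}$ by writing it as $\max L(v,\cdot)$ plus the negation of $\min L(v,\cdot)$ and checking each piece, then observe that $I_v^g$ is a coverage map $A \mapsto \bigcup_{e\in A} I_v^g(\{e\})$, and finally invoke a general composition lemma (monotone submodular composed with coverage is monotone submodular), proved via the lattice inequality $\phi(X)+\phi(Y)\ge\phi(X\cup Y)+\phi(X\cap Y)$ together with monotonicity. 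Your route is more modular and arguably cleaner: it eliminates the case enumeration entirely (including the empty-exposure cases the paper waves away), and the composition lemma immediately extends the result to any monotone submodular diversity measure on item sets, which is relevant to the paper's closing remark about experimenting with other diversity functions. The paper's route is more elementary and self-contained, requiring no general submodularity machinery. One small remark: in this specific instance each $I_v^g(\{e\})$ is a singleton or empty, so the full strength of your set-valued composition lemma is not strictly needed, but stating it in that generality costs nothing and is where the reusable content lies.
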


\begin{proof}
To prove the lemma, we utilize the possible-world semantics.
It is well known that a non-negative linear combination of submodular functions is also submodular.
Therefore, to prove submodularity of $\xdivscore{\cdot}$,
it is sufficient to show that in any possible world $g \sqsubseteq \multiG$,
$f_v: 2^{\calE} \rightarrow [0,1]$ is submodular.
Similarly, to prove monotonicity of $\xdivscore{\cdot}$, it suffices to show the monotonicity of $f_v(\cdot)$ in any possible world $g$. 

Now, recall that we have $f_v(I_v^g(A)) = 1-\frac{1}{4}g_v(I_v^g(A))$. We will show that $g_v(I_v^g(A))$ is supermodular and monotonically non-increasing in $A$ which will directly imply the submodularity and monotonicity of  $f_v(I_v^g(A))$. 

First we show that $g_v(I_v^g(A))$ is monotonically non-increasing in $A$ by showing that
$g_v(I_v^g(A)) \ge g_v(I_v^g(A\cup e))$ for any $A \subseteq \calE$ and $(w,x) \in \calE \setminus A$. 

First, consider the case $path_g^x(w,v) = 0$. Notice that in this case we have $g_v(I_v^g(A)) = g_v(I_v^g(A\cup \{(w,x)\}))$ as $I_v^g(A) = I_v^g(A\cup \{(w,x)\})$. Now, consider the case $path_g^x(w,v) = 1$. In this case, we have $I_v^g(A \cup \{(w,x)\}) = I_v^g(A) \cup \{x\}$. Let $i, j \in I_v^g(A)$ be such that $\ell(i)$ and $\ell(j)$ are the immediate predecessor and successor of $\ell(x)$ in $L(v,I_v^g(A \cup \{(w,x)\}))$ respectively, i.e., $\not\exists y \in L(v,I_v^g(A \cup \{(w,x)\}))$ such that $\ell(i)\le \ell(y) \le \ell(x)$ or $\ell(x) \le \ell(y) \le \ell(j)$. 

Then we have, 
\begin{align*}
g_v (I_v^g  & (A \cup \{(w,x)\})) - g_v(I_v^g(A)) \\ 
& =  (\ell(i) -\ell(x))^2+(\ell(x)-\ell(j))^2 - (\ell(i)-\ell(j))^2 \\
& = (\ell(i)-\ell(x))^2+(\ell(x)-\ell(j))^2 \\ 
& \quad\,\, - (\ell(i)-\ell(x)+\ell(x)-\ell(j))^2 \\
& \le 0. 
\end{align*}
We have just shown that $g_v(I_v^g(A))$ is monotonically non-increasing in $A$.

We now show that $g_v(I_v^g(A))$ is supermodular in $A$.
Let $g_v(I_v^g((w,x) \mid A))$ denote the marginal decrease in the penalty  
when $(w,x)$ is added to the assignment $A$:
\[
g_v(I_v^g((w,x) \mid A)) =  g_v(I_v^g(A \cup \{(w,x)\})) - g_v(I_v^g(A)).
\]
To show that $g_v(I_v^g(\cdot))$ is supermodular, we need to show that
\[
g_v(I_v^g((w,x) \mid A)) \le g_v(I_v^g((w,x) \mid B)),
\]
for any $A \subseteq B \subseteq \calE$ and $(w,x) \not\in B$.

Let $B = A \cup \{(z,y)\}$ for some $(z,y) \in \calE \setminus A$. First, notice that if $\p^x_g(w,v) = 0$ and $\p^y_g(z,v) = 0$, then the analysis is trivial, since, $I_v^g((w,x) \mid A) = I_v^g((w,x) \mid B) = I_v^g(A)$, resulting in $g_v(I_v^g((w,x) \mid A)) = g_v(I_v^g((w,x) \mid B)) = 0$. Next, we provide the analysis for the case $\p^x_g(w,v) = 1$ and $\p^y_g(z,v) = 1$, and omit the analysis of the other two cases in which either $\p^x_g(w,v) = 0$ or $\p^y_g(z,v) = 0$ as their analysis use similar arguments.

We now start the analysis for the case $\p^x_g(w,v) = 1$ and $\p^y_g(z,v) = 1$. 
To do so, we perform case-by-case analysis based on how $\ell(x)$ is compares to the leanings in $L(v,I_v^g(A))$ and $L(v,I_v^g(B))$.

Let $i, j \in I_v^g(A)$ be such that $\ell(i)$ and $\ell(j)$ are the immediate predecessor and successor of $\ell(x)$ in $L(v,I_v^g(A \cup \{(w,x)\}))$. 
Next, we consider the following two cases.

\begin{itemize}
	\item {Case 1}: $y$ is such that $\ell(y) < \ell(i)$ or $\ell(y)>\ell(j)$. Then we have $g_v(I_v^g((w,x) \mid B)) = g_v(I_v^g((w,x) \mid A))$.

	\item {Case 2}: $y$ is such that $\ell(i)\le \ell(y)\le \ell(x)$.
	Then we have
\begin{align*}
g_v & (I_v^g((w,x) \mid A)) - g_v(I_v^g((w,x) \mid B)) \\
 & = (\ell(i)-\ell(x))^2-(\ell(i)-\ell(j))^2-(\ell(y)-\ell(x))^2 \\
 & \quad\,\, +(\ell(y)-\ell(j))^2 \\
 & = -2\ell(i)\ell(x)+2\ell(i)\ell(j)+2\ell(y)\ell(x)-2\ell(y)\ell(j) \\
 & \quad\,\, + (\ell(x)-\ell(j))(2\ell(y)-2\ell(i)) \le 0
\end{align*}
	\item {Case 3}: $y$ is such that $\ell(x)\le \ell(y)\le \ell(j)$. 
	This case is symmetric to Case 2, so we omit the proof for brevity.
\end{itemize}
 
\end{proof}

\begin{theorem}
\label{theo:hardness}
Problem~\ref{pr:diversityMax} is \NPhard.
\end{theorem}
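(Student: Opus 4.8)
The plan is to prove that Problem~\ref{pr:diversityMax} is \NPhard by a polynomial-time reduction from a classical \NPhard problem, exploiting the fact that our problem strictly generalizes influence maximization. The quickest route is to observe that influence maximization in the independent-cascade model~\cite{kempe03} is exactly the special case with a single item. Concretely, I would take $h = 1$ with one item $i^{*}$, set $\ell(v) = 0$ for every $v \in V$ and $\ell(i^{*}) = 1$ (so the weak assumption that some $\ell(e)\neq\ell(e')$ holds), keep the given propagation probabilities, and let every attention bound $k_u \ge 1$ so it never binds. Then an assignment $A \subseteq \calE$ is just a seed set $S = A_{i^{*}} \subseteq V$ with $\lvert S\rvert = \lvert A\rvert \le k$, a node $v$ attains $f_v = 1$ exactly when it is exposed to $i^{*}$ and $f_v = 0$ otherwise, and hence $\xdivscore{A}$ coincides with the expected independent-cascade spread $\sigma(S)$. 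Maximizing $\xdivscore{A}$ over feasible $A$ is therefore identical to maximizing $\sigma(S)$ over seed sets of size at most $k$, which is \NPhard by Kempe et al.~\cite{kempe03}; the general problem is thus \NPhard.

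To avoid invoking influence-maximization hardness as a black box, I would instead give an explicit reduction from \textsc{Maximum Coverage} (equivalently \textsc{Set Cover}) on a single-item, \emph{deterministic} instance. Given a universe and a family of sets, build a bipartite graph with one ``set node'' per set and one ``element node'' per universe element, a directed edge of probability $1$ from a set node to each element it contains, a single item $i^{*}$ with $\ell(i^{*}) = 1$, leaning $1$ on all set nodes and leaning $-1$ on all element nodes, budget $k$, and non-binding attention bounds. Because every propagation probability is $1$, the multigraph $\multiG$ has a unique possible world and $\xdivscore{A} = F(A)$ deterministically. In that world a seeded set node contributes $f = 0$ (its leaning matches $i^{*}$), an element node reached from some seed contributes $f = 2$, and every other node contributes $0$, so $F(A) = 2\,\lvert\{\text{element nodes exposed by } A\}\rvert$. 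Hence choosing $k$ set nodes that cover at least $t$ elements yields $\xdivscore{A} \ge 2t$, which settles one direction immediately.

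The main obstacle is the converse: a feasible $A$ of value at least $2t$ might spend part of its budget seeding element nodes rather than set nodes, and such a solution does not directly correspond to a choice of $k$ sets. The key step I would carry out is an exchange argument showing that, without loss of generality, an optimal assignment seeds only set nodes. Assuming every element lies in at least one set, replacing any element-node seed $u$ by a set node containing $u$ can only enlarge the set of exposed element nodes and hence weakly increase $F$; if that set node is already seeded, the freed budget slot can be reassigned (or left unused) without decreasing coverage. Iterating produces an equal-or-better solution using set-node seeds exclusively, for which $F(A) = 2\cdot(\text{number of covered elements})$, so $\xdivscore{A} \ge 2t$ exactly certifies a cover of at least $t$ elements by at most $k$ sets. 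Combining both directions gives the equivalence, and since the construction is clearly polynomial, it follows that the problem is \NPhard. I expect this exchange argument, together with the minor bookkeeping about already-seeded replacement nodes, to be the only nontrivial part; everything else is a direct evaluation of $f_v$ in the single deterministic possible world.
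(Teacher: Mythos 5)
Your first paragraph is essentially the paper's own proof: the paper also restricts to a single item $i'$ with $\ell(i')=1$ and $\ell(v)=0$ for all $v$, observes that $f_v(I_v^g(A))$ becomes the indicator of $v$ being reached, identifies $\xdivscore{A}$ with the expected spread of the seed set $S=\{u : (u,i')\in A\}$, and invokes the \NP-hardness of influence maximization from Kempe et al.~\cite{kempe03}. That part is correct and complete as a proof of Theorem~\ref{theo:hardness}. Your second, self-contained reduction from \textsc{Maximum Coverage} is a genuine alternative that the paper does not pursue: by forcing all probabilities to $1$ you collapse $\multiG$ to a single possible world, and your leaning choices ($+1$ on set nodes and the item, $-1$ on element nodes) make $F(A)$ exactly twice the number of covered elements once the exchange argument has normalized the seeds to set nodes. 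The exchange step is sound (replacing an element-node seed by a containing set node weakly increases coverage, and the already-seeded case just frees a budget slot), so this route also works; what it buys is independence from the influence-maximization hardness result and a deterministic instance with no probabilistic reasoning, at the cost of the extra bookkeeping you correctly identify as the only nontrivial part. Either argument suffices; the paper opts for the shorter black-box reduction.
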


\begin{proof}
We will show that Problem~\ref{pr:diversityMax}
contains the influence maximization problem as a restricted special case,
which is shown to be \NPhard~\cite{kempe03}.
Consider the case where $H$ consists of a single item, which we denote by $i'$.
Let $\ell(i') = 0$ and $\ell(v) = 1$, for all $v \in V$.
Notice that since $h = 1$, the multiplicity of each edge in $\multiG$ is $1$.
In any $g \sqsubseteq \multiG$, if a node $v$ is exposed to $A$
then $f_v(I_v^g(A)) = \frac{1}{2}$,
while  $f_v(I_v^g(A)) = 0$ if $v$ is not exposed to $A$.
For any assignment $A$, define the seed set $S = \{u \mid (u,i') \in A\}$.
Let $C_g(S)$ denote the number of nodes reachable by $S$ in $g$.
Notice the equivalence $C_g(S) = 2\sum_{v \in V} f_v(I_v^g(A))$ in $g$.
Now, let $S^{*} \subseteq V$ denote the optimal solution
to the influence maximization problem with parameter $k$.
The expected spread of $S^{*}$ is given by
${\Exp}\left[C_g(S^{*})\right]$.
If $S^{*}$ is the seed set that maximizes ${\Exp}\left[C_g(S)\right]$, then,
$\optA = \{(u,i')\mid u \in S^{*}\}$ is the assignment that maximizes
${\Exp}\left[\sum_{v \in V} f_v(I_v^g(A)) \right]$.
Thus, solving the influence maximization problem and obtaining $S^{*}$,
yields the optimal solution $\optA$ for Problem~\ref{pr:diversityMax}.
\end{proof}

Given the monotonicity and submodularity of the objective function,
a standard greedy algorithm can be used to solve Problem~\ref{pr:diversityMax}.
The pseudocode is given in Algorithm~\ref{alg:basicGreedy}.
Let $\xgaindivscore{(u,i)}{A} = \xdivscore{A \cup \{(u,i)\}} -  \xdivscore{A}$
denote the marginal increase in the expected diversity exposure score of an assignment $A$
if $(u,i)$ is added to $A$.
Let $\greedyA$ denote the greedy solution.
At each iteration, the greedy algorithm chooses the feasible pair $(u^*, i^*)$
that yields the maximum gain in the expected diversity exposure score among all the feasible
pairs.\!\footnote{We say that a pair $(u,i)$ is feasible if it can be added to the current assignment $\greedyA$ without breaking the attention bound constraint $k_u$.}
The algorithm terminates 
when $\lvert \greedyA \rvert = k$.

\begin{algorithm}[t]
\caption{Greedy Algorithm}
\label{alg:basicGreedy}
\Indm
{\small
\SetKwInOut{Input}{Input}
\SetKwInOut{Output}{Output}
\SetKwComment{tcp}{//}{}
\Input{$\multiG = (V, \multiE, \multiP)$;
size constraint $k$; attention bound constraint $k_u$ for all $u \in V$;
leanings $\ell(i)$ for all $i \in H$, and $\ell(u)$ for all $u \in V$}
\Output{Greedy solution $\greedyA$}
}
\Indp
{\small
$\greedyA \leftarrow \emptyset$ \\
}
\BlankLine
\While{$\lvert \greedyA \rvert \le k$} {
		$(u^*, i^*) \leftarrow \underset{(u,i)} {\argmax\,} \xgaindivscore{(u,i)}{\greedyA}$, subject to: \ $\lvert \{i: (u,i) \in \greedyA \} \rvert \le k_u$ \label{line:sharpHard} \\
		$\greedyA \leftarrow \greedyA \cup \{(u^*,i^*)\}$ \\
	}
	{\bf return} $\greedyA$
\end{algorithm}

Before we analyze the approximation guarantee of the greedy algorithm,
we remind the reader of the following notions.

\begin{definition}[Matroid]
\label{def:Matroid}
A set system $(\calE,\calF)$,
defined over a finite ground set $\calE$ and a family $\calF$ of subsets of $\calE$,
is a matroid $\calM = (\calE, \calF)$ if
\squishlist
\item[($i$)] $\calF$ is non-empty;
\item[($ii$)] $\calF$ is downward closed,
i.e., $X \in \calF$ and $Y \subseteq X$ implies $Y \in \calF$; and
\item[($iii$)] $\calF$ satisfies the augmentation property,
i.e., for all $X,Y \in \calF$ with $|Y| > |X|$,
there exists an element $e \in Y \setminus X$
such that $X \cup \{e\} \in \calF$.
\squishend
\end{definition}

\begin{definition}[Uniform Matroid]
\label{def:uniMatroid}
A matroid $\calM = (\calE, \calF)$ is a uniform matroid
if $\calF = \{X \subseteq \calE : \lvert X \rvert \le k \}$.
\end{definition}

\begin{definition}[Partition Matroid]\label{def:PartitionMatroid}
Let $\calE_1, \cdots, \calE_Z$ be a partition of the ground set $\calE$ into $Z$ non-empty disjoint subsets.
Let $d_z$ be an integer with $0 \le d_z \le \lvert\calE_z\rvert$, for each $z=1,\ldots,Z$.
A matroid $\calM = (\calE, \calF)$ is a partition matroid if
$\calF = \{X \subseteq \calE : \lvert X \cap \calE_z \rvert \le d_z, \mbox{ for all } z = 1,\cdots,Z \}$.
In other words, a partition matroid contains exactly the sets $X \subseteq \calE$
that share at most $d_z$ elements with each subset $\calE_z$.
\end{definition}

\begin{lemma}\label{lem:constraintMatroid}
Given the ground set $\calE = V \times H$ of user$\,\times\,$item assignments,
an integer $k$, and integers $k_u$, for all $u\in  V$,
let $\calF \subseteq 2^{\calE}$ denote the set of feasible solutions to Problem~\ref{pr:diversityMax}.
Then, $\calM = (\calE, \calF)$ is a matroid defined on $\calE$.
\end{lemma}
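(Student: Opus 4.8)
The plan is to verify directly the three defining properties of Definition~\ref{def:Matroid} for the set system $(\calE, \calF)$, where $\calF$ collects all assignments $A \subseteq \calE$ satisfying both the global bound $\lvert A \rvert \le k$ and the per-user bounds $\lvert A_u \rvert \le k_u$. The first two axioms are immediate and I would dispatch them quickly: non-emptiness holds because $\emptyset \in \calF$ meets all bounds trivially; downward closure holds because if $A \in \calF$ and $B \subseteq A$, then $\lvert B \rvert \le \lvert A \rvert \le k$ and $\lvert B_u \rvert \le \lvert A_u \rvert \le k_u$ for every user $u$, so $B \in \calF$.

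The crux of the argument, and the step I expect to be the main obstacle, is the augmentation property. Fix $X, Y \in \calF$ with $\lvert Y \rvert > \lvert X \rvert$; I must exhibit $e \in Y \setminus X$ with $X \cup \{e\} \in \calF$. First I would observe that the global constraint cannot obstruct the augmentation: since $\lvert X \rvert < \lvert Y \rvert \le k$, we have $\lvert X \rvert < k$, so $\lvert X \cup \{e\} \rvert \le k$ holds for any single added element. It then remains to respect the per-user bounds. Writing $\lvert Y \rvert = \sum_u \lvert Y_u \rvert$ and $\lvert X \rvert = \sum_u \lvert X_u \rvert$, the strict inequality $\lvert Y \rvert > \lvert X \rvert$ forces some user $u^*$ with $\lvert Y_{u^*} \rvert > \lvert X_{u^*} \rvert$. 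For this $u^*$, feasibility of $Y$ gives $\lvert X_{u^*} \rvert < \lvert Y_{u^*} \rvert \le k_{u^*}$, so $X$ is not yet saturated at $u^*$; moreover, since $\lvert Y_{u^*} \rvert > \lvert X_{u^*} \rvert$, there is an item $i^* \in Y_{u^*} \setminus X_{u^*}$. Taking $e = (u^*, i^*)$ then yields $e \in Y \setminus X$, and $X \cup \{e\}$ satisfies both the per-user bound at $u^*$ (as $\lvert X_{u^*} \rvert + 1 \le k_{u^*}$) and the global bound, so $X \cup \{e\} \in \calF$.

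Conceptually, $\calF$ is the intersection of the uniform matroid induced by $\lvert A \rvert \le k$ (Definition~\ref{def:uniMatroid}) with the partition matroid induced by the per-user bounds $\lvert A_u \rvert \le k_u$ over the partition $\{\{u\} \times H : u \in V\}$ of $\calE$ (Definition~\ref{def:PartitionMatroid}). Although intersections of matroids are not matroids in general, here the single block $\calE$ of the uniform matroid contains every block of the partition matroid, so the two block families form a laminar family. I would flag this nesting as exactly what makes the augmentation step succeed: because each per-user block sits inside the global block, choosing an element in the deficient block $u^*$ automatically keeps the (non-binding) global count admissible, which is precisely the obstruction that causes the usual failure of matroid intersection and which is absent here.
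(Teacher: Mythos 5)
Your proof is correct, but it takes a genuinely different route from the paper's. The paper decomposes $\calF$ as the intersection $\calF_k \cap \calF_p$ of a uniform matroid (the global budget $k$) and a partition matroid (the per-user bounds $k_u$ over the blocks $\{u\}\times H$), and then, rather than verifying the exchange axiom, invokes the known fact that intersecting an arbitrary matroid with a uniform matroid --- the \emph{truncation} operation --- always yields a matroid~\cite{Cygan2015parameterized}. You instead verify Definition~\ref{def:Matroid} from scratch; the only nontrivial step is augmentation, which you settle with the standard counting argument: since $\sum_u \lvert Y_u\rvert > \sum_u \lvert X_u\rvert$, some user block is strictly deficient in $X$ relative to $Y$, and adding any element of $Y\setminus X$ from that block respects both the per-user bound there (as $\lvert X_{u^*}\rvert < \lvert Y_{u^*}\rvert \le k_{u^*}$) and the global bound, which is non-binding because $\lvert X\rvert < \lvert Y\rvert \le k$. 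All steps check out, including the existence of $i^*\in Y_{u^*}\setminus X_{u^*}$, which follows from the strict cardinality gap. Your closing remark about the laminar nesting of the blocks correctly isolates why this particular matroid intersection succeeds, which is exactly the structural content of the truncation theorem the paper cites. The trade-off is the usual one: the paper's proof is shorter and leans on a black box, while yours is self-contained and makes the mechanism visible; both are complete.
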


\begin{proof}
To prove this result, we will show that
(i) the constraint $\lvert A \rvert \le k$ corresponds to a uniform matroid defined on~$\calE$,
which we denote by $\calM_k$;
(ii) the constraints $\lvert\{i : (u,i) \in A \}\rvert \le k_u$,
for all $u \in V$, correspond to a partition matroid defined on~$\calE$, which we denote by $\calM_p$;
(iii) the intersection $\calM= \calM_k \cap \calM_p$ is also a matroid defined on $\calE$.

Let $\calF_k \subseteq 2^{\calE}$ denote the set of assignments of size at most $k$,
i.e., $\calF_k = \{A \subseteq \calE: \lvert A \rvert \le k \}$.
It is easy to see that $\calM_k = (\calE, \calF_k)$ is a uniform matroid.

Let $\calF_p \subseteq 2^{\calE}$ denote the set of assignments
that do not violate any user attention bound constraint,
i.e., for all $A \in \calF_p$,
we have $\lvert \{i: (u,i) \in A \}\rvert \le k_u$, for all $u \in V$.
Define $\calE_u = \{(u,i): i \in H\}$, for all $u \in V$.
The sets $\calE_u$, with $u \in  V$, form a partition of $\calE$ into $n$ disjoint sets.
Notice that an assignment $\mathcal{A} \subseteq \calE$ can belong in $\calF_p$ if and only if
\begin{align*}
\lvert \mathcal{A} \cap \calE_u \rvert \le k_u, \mbox{ for all } u\in V.
\end{align*}
Hence, $\calM_p = (\calE, \calF_p)$ is a partition matroid.

Notice that the set $\calF$ of feasible solutions to Problem~\ref{pr:diversityMax}
is given by $\calF = \calF_k \cap \calF_p$.
Hence, the set system $(\calE, \calF)$ corresponds to the intersection of matroids
$\calM_k$ and $\calM_p$ that are both defined on $\calE$.
Note that the intersection of two matroids is not necessarily a matroid in general. However, in this case we have the intersection of a matroid with a uniform matroid, which is known to always result in a matroid; this operation is known as the \textit{truncation} of a matroid~\cite{Cygan2015parameterized}.
\end{proof}

\begin{theorem}
\label{theo:approximation}
Algorithm~\ref{alg:basicGreedy} achieves an approximation guarantee of $1/2$.
\end{theorem}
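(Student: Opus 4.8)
The plan is to reduce the statement to the classical theorem of Fisher, Nemhauser, and Wolsey, which guarantees a $1/2$-approximation for the greedy algorithm when maximizing a monotone submodular function subject to a single matroid constraint. Both hypotheses are already available: Lemma~\ref{lemma:submodularity} shows that $\xdivscore{\cdot}$ is monotone and submodular, and Lemma~\ref{lem:constraintMatroid} shows that the family $\calF$ of feasible assignments forms a matroid $\calM = (\calE, \calF)$. What remains is to observe that Algorithm~\ref{alg:basicGreedy} is exactly the matroid greedy on this instance: at every step it adds the feasible pair of maximum marginal gain, and ``feasible'' coincides precisely with independence in $\calM = \calM_k \cap \calM_p$ (the attention bound is the partition matroid $\calM_p$, the cardinality bound the truncating uniform matroid $\calM_k$). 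Once this identification is made, the theorem applies directly.

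To keep the argument self-contained, I would reprove the bound. Write $A^{(t)}$ for the assignment held by greedy after $t$ insertions, with final output $A^{(r)} = \greedyA$ and $g_t$ the pair added at step $t$; recall $F(\emptyset)=0$. By monotonicity and then submodularity,
\[
\optvalue = \xdivscore{\optA} \le \xdivscore{\optA \cup \greedyA} \le \xdivscore{\greedyA} + \sum_{e \in \optA \setminus \greedyA} \xgaindivscore{e}{\greedyA}.
\]
The crux is to bound the residual sum by $\xdivscore{\greedyA}$, which would immediately give $\optvalue \le 2\,\xdivscore{\greedyA}$ and hence the claimed ratio.

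This is where the matroid structure enters. Since $\greedyA$ and $\optA$ are both independent and $\greedyA$ is maximal (greedy halts only when no feasible pair can be added), the augmentation property of Definition~\ref{def:Matroid} yields an injection $\sigma \colon \optA \setminus \greedyA \to \greedyA \setminus \optA$ for which $A^{(\sigma(e)-1)} \cup \{e\}$ is independent for every $e$. Thus $e$ was an eligible candidate at step $\sigma(e)$, so the greedy selection rule together with submodularity gives
\[
\xgaindivscore{e}{\greedyA} \le \xgaindivscore{e}{A^{(\sigma(e)-1)}} \le \xdivscore{A^{(\sigma(e))}} - \xdivscore{A^{(\sigma(e)-1)}}.
\]
Because $\sigma$ is injective, the indices $\sigma(e)$ are distinct, and summing these nonnegative single-step gains telescopes to at most $\xdivscore{\greedyA} - \xdivscore{\emptyset} = \xdivscore{\greedyA}$, closing the argument.

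The main obstacle I anticipate is not the telescoping algebra but the construction of the injection $\sigma$ and the accompanying bookkeeping. I must confirm that greedy terminates at a maximal independent set so that the augmentation property can be invoked element-by-element, and I must handle the case where the rank of $\calM$ is strictly less than $k$ (so greedy stops before reaching size $k$), as well as the off-by-one in the loop's termination condition. Establishing the exchange matching cleanly---ideally via the standard basis-exchange lemma applied to $\greedyA$ and $\optA$---is the step demanding the most care; everything downstream is routine.
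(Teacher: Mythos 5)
Your first paragraph is exactly the paper's proof: the authors combine Lemma~\ref{lemma:submodularity} (monotonicity and submodularity) with Lemma~\ref{lem:constraintMatroid} (the feasible assignments form a matroid, as the truncation $\calM_k \cap \calM_p$) and then simply invoke Fisher, Nemhauser, and Wolsey for the $1/2$ guarantee of greedy under a single matroid constraint. So as far as the paper is concerned, your argument is complete once that identification is made.

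The additional self-contained derivation you sketch is the standard one and its skeleton is sound: the chain $\optvalue \le \xdivscore{\optA \cup \greedyA} \le \xdivscore{\greedyA} + \sum_{e \in \optA \setminus \greedyA} \xgaindivscore{e}{\greedyA}$ is correct, and the telescoping step works once the injection exists. The one place where your plan as written would not go through is the claim that the injection $\sigma$ ``follows from the augmentation property'' or from ``the standard basis-exchange lemma applied to $\greedyA$ and $\optA$.'' Basis exchange between the two final sets gives you a bijection $\pi$ with $(\greedyA \setminus \{\pi(e)\}) \cup \{e\}$ independent, which is strictly weaker than what you need, namely that $e$ was still a \emph{feasible candidate at the step when $\pi(e)$ was selected}, i.e.\ that $A^{(t-1)} \cup \{e\} \in \calF$ for the relevant prefix $A^{(t-1)}$. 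The standard way to get this is a counting argument on greedy's prefixes: for each $e \in \optA \setminus \greedyA$ let $t(e)$ be the first step at which $A^{(t)} \cup \{e\}$ becomes infeasible; since every such $e$ with $t(e) \le t$ lies in the span of $A^{(t)}$ and $\optA$ is independent, at most $t$ of them can have $t(e) \le t$, and Hall's condition then yields an injection assigning each $e$ a distinct step $\sigma(e) \le t(e)$. You correctly flag this as the delicate step, but the tool you name for it is not the right one; with the span-counting (or Hall) argument substituted in, your re-derivation is correct. You should also note that when greedy halts at $\lvert \greedyA \rvert = k$ the set $\greedyA$ is a basis of the truncated matroid, which is what licenses the exchange argument even though $\greedyA$ need not be maximal in $\calM_p$ alone.
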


\begin{proof}
As we have shown in Lemma~\ref{lem:constraintMatroid},
the constraints of Problem~\ref{pr:diversityMax} correspond to a matroid defined on the ground set~$\calE$.
Moreover, we have shown in Lemma~\ref{lemma:submodularity}
that the objective function of Problem~\ref{pr:diversityMax} is monotone and submodular.
Thus, Problem~\ref{pr:diversityMax} corresponds to monotone submodular function maximization
subject to a matroid constraint.

Therefore, the approximation guarantee of Algorithm~\ref{alg:basicGreedy}
thus follows from the result of Fisher et al.~\cite{fisher1978analysis} for submodular function maximization subject to a matroid constraint. 
\end{proof}

\section{Scalable approximation algorithms}
\label{sec:algorithms}
The efficient implementation of the greedy algorithm (Algorithm~\ref{alg:basicGreedy}) is a challenge as the operation on line~\ref{line:sharpHard} translates to a large number of expected spread computations: in each iteration, the greedy algorithm requires to compute the expected marginal gain
$\xgaindivscore{(u,i)}{\greedyA}$ for every feasible pair $(u,i)$,
which in turn requires to identify the set $I_v^g(A \cup \{(u,i)\})$
of items that every $v$ is exposed to in each $g \sqsubseteq \multiG$, which is akin to computing the expected influence spread when $h=1$.

Computing the expected influence spread of a given set of nodes
under the independent-cascade model is \SPhard~\cite{ChenWW10}.
A common practice is to estimate the expected spread using Monte Carlo (MC) simulations~\cite{kempe03}.
However, accurate estimation requires a large number of MC simulations.

Hence, 
considerable effort has been devoted in the literature
to developing scalable approximation algorithms.
Recently, Borgs et al.~\cite{borgs14} introduced the idea of sampling \emph{reverse-reachable} sets (\rr-sets),
and proposed a quasi-linear time randomized algorithm.
Tang et al.\ improved it to a near-linear time randomized algorithm,
called {\em Two-phase Influence Maximization} (\tim)~\cite{tang14},
and subsequently tightened the lower bound on the number of random \rr-sets
required to estimate influence with high probability~\cite{tang2015influence}.

Random \rr-sets are critical for efficient estimation of the expected influence spread.
However, they are designed for the standard influence-maximization problem,
which is a special case of Problem~\ref{pr:diversityMax}.
We introduce a non-trivial generalization of reverse-reachable sets,
which we name \emph{reverse co-exposure} sets ({\rc}-sets), and
devise estimators for accurate estimation of the expected diversity exposure score $\xdivscore{\cdot}$.

\subsection{Reverse co-exposure sets}

Recall that we can interpret $\multiG$ as a probability distribution
over all subgraphs of $(V, \multiE)$,
where each edge $(u,v)_i \in \multiE$ is realized with probability $p_{uv}^i$.
Let $g \sim \multiG$ be a graph drawn from the random graph distribution $\multiG$.
Notice that, over the randomness in $g$,
the set $I_v^g(A)$ can be regarded as a Multinoulli random variable with $2^h$ outcomes,
where each outcome corresponds to one of the subsets of $H$.
Now, let $\multiR_{v,g} \subseteq \calE$ denote the set of pairs in $g$
that can color-reach~$v$, i.e.,
$\multiR_{v,g} = \{(u,i) \in \calE : \p_g^i(u,v) = 1\}$.
Also let
\[
I(A \cap \multiR_{v,g}) = \{i \in H : (u,i) \in  A \cap \multiR_{v,g}\}.
\]

The following lemma establishes the activation equivalence property
that forms the foundations of random \emph{reverse co-exposure} sets ({\rc}-sets).

\begin{lemma}\label{lemma:actEquivalence}
Let $I$ be a subset of $H$.
For any assignment $A$ and for all $v \in V$, we have
\[
\prob_{g \sim \multiG}\left(I_v^g(A) = I \right) \;=\; \prob_{g \sim \multiG} (I(A \cap \multiR_{v,g}) = I).
\]
\end{lemma}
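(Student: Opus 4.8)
The plan is to establish the equality not merely in distribution but \emph{pointwise}: I will show that for every possible world $g \sqsubseteq \multiG$, the two sets $I_v^g(A)$ and $I(A \cap \multiR_{v,g})$ coincide as subsets of $H$. Once this per-world set identity is in hand, the claimed equality of probabilities is immediate, since the same random graph $g$ drives both sides, and two random variables that agree on every outcome have identical distributions; in particular they assign equal probability to the event of taking any fixed value $I \subseteq H$.

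The core step is a direct unwinding of the definitions. Fix a world $g$ and an item $i \in H$. By definition, $i \in I_v^g(A)$ precisely when there exists a pair $(u,i) \in A$ with $\p_g^i(u,v) = 1$. On the other side, $(u,i) \in A \cap \multiR_{v,g}$ holds exactly when $(u,i) \in A$ and $\p_g^i(u,v) = 1$, so $i \in I(A \cap \multiR_{v,g})$ precisely when some such pair $(u,i)$ exists. These two membership conditions are literally the same predicate, giving $I_v^g(A) = I(A \cap \multiR_{v,g})$ in every $g$.

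Since the identity holds in each possible world, I transfer it to the probability statement by summing over worlds weighted by the possible-world probability $\prob[g]$ (as in Equation~(\ref{eq:objPossibWorld})): for any fixed $I \subseteq H$,
\[
\prob_{g \sim \multiG}(I_v^g(A) = I)
= \sum_{g\, :\, I_v^g(A) = I} \prob[g]
= \sum_{g\, :\, I(A \cap \multiR_{v,g}) = I} \prob[g]
= \prob_{g \sim \multiG}(I(A \cap \multiR_{v,g}) = I),
\]
where the middle equality just relabels the index set using the per-world identity.

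I do not expect a genuine technical obstacle: the lemma is true by construction, and the only care required is to keep the forward exposure process (which items reach $v$) and the reverse color-reachability bookkeeping aligned through the shared predicate $\p_g^i(u,v)$. The substantive content is conceptual rather than computational, namely that the exposure of $v$ can be read off from the reverse color-reachable set $\multiR_{v,g}$ alone; this is exactly what licenses the subsequent RC-set sampling scheme and is the correct generalization of the standard RR-set activation-equivalence argument to the item-colored, set-valued setting.
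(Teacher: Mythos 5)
Your proposal is correct and matches the paper's own proof essentially verbatim: both establish the per-world set identity $I_v^g(A) = I(A \cap \multiR_{v,g})$ by unwinding the shared predicate $\p_g^i(u,v)=1$, and then transfer it to the probability statement by summing indicator-weighted possible-world probabilities. No gaps.
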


\begin{proof}
Notice that in any possible world $g$, we have:
\begin{eqnarray*}
I_v^g(A)
&=& \{i \in H: \exists\, (u,i) \in A \text{ such that } \p_g^i(u,v) = 1\} \\
&=& \{i \in H : (u,i) \in  A \cap \multiR_{v,g}\} \\
&=& I(A \cap \multiR_{v,g}).
\end{eqnarray*}
Hence we have
\begin{eqnarray*}
\prob_{g \sim \multiG}\left(I_v^g(A) = I \right)
&=& \sum_{g \sqsubseteq \probG} Pr[g] \, \mathbbm{1}_{[I_g(A) = I]} \\
&=& \sum_{g \sqsubseteq \probG} Pr[g] \, \mathbbm{1}_{[I(A \cap \multiR_{v,g}) = I]} \\
&=& \prob_{g \sim \multiG} (I(A \cap \multiR_{v,g}) = I).
\end{eqnarray*}
\end{proof}
Next we formally define the concept of random {\rc}-sets.

\spara{Random~{\rc}-sets.}
Given a probabilistic multi-graph $\multiG = (V,\multiE, \multiP)$ and a set $H$ of items,
a random {\rc}-set $\multiR_{v,g}$ is generated as follows.
First, we remove each edge $(u,v)_i$ from $\multiG$ with probability $1-p^i_{uv}$, generating thus a possible world~$g$.
Next, we pick a \emph{target} node $v$ uniformly at random from $V$.
Then, $\multiR_{v,g}$ consists of the pairs that can \emph{color-reach} $v$,
i.e., all pairs $(u,i)$  
for which $\p_g^i(u,v) = 1$.

Sampling a random {\rc}-set $\multiR_{v,g}$ can be implemented efficiently
by first choosing a target node $v \in V$ uniformly at random and
then performing a breadth-first search (BFS) from $v$ in $\multiG$.
Notice that a random {\rc}-set $R_{v,g}$ is subject to two levels of randomness:
(i) randomness over $g \sim \multiG$, and
(ii) randomness over the selection of target node $v \sim V$.

\begin{lemma}
\label{lemma:RCEstimation}
For any random {\rc}-set $R_{v,g}$, 
let the random variable
$w(A \cap \multiR_{v,g}) = f_v(I(A \cap \multiR_{v,g}))$
represent the diversity exposure weight of $A$ on $\multiR_{v,g}$.
Then,
$\xdivscore{A} = n\, \underset{v,g} {\Exp} \left[w(A \cap \multiR_{v,g}) \right],$
where the expectation is taken over the randomness in $v \sim V$ and $g \sim \multiG$.
\end{lemma}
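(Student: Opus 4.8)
The plan is to start from the possible-world expression for the objective in Equation~(\ref{eq:objPossibWorld}) and reconcile it with the right-hand side by carefully unfolding the two independent sources of randomness carried by a random RC-set: the draw $g \sim \multiG$ of a possible world, and the uniform draw $v \sim V$ of the target node. Since these are independent and $v$ is uniform, I would expand
\begin{align*}
\underset{v,g}{\Exp}\left[w(A \cap \multiR_{v,g})\right]
= \sum_{g \sqsubseteq \multiG} \prob[g] \cdot \frac{1}{n} \sum_{v \in V} w(A \cap \multiR_{v,g}),
\end{align*}
so that the factor $1/n$ coming from uniform sampling of the target is precisely what the leading $n$ in the statement is meant to cancel.

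The key step is to replace the RC-set weight $w(A \cap \multiR_{v,g})$ by the diversity exposure level $f_v(I_v^g(A))$ inside every fixed possible world. For this I would invoke the pointwise identity established in the proof of Lemma~\ref{lemma:actEquivalence}, namely that $I_v^g(A) = I(A \cap \multiR_{v,g})$ holds \emph{for each fixed} $g$ and $v$, not merely in distribution. Because $f_v$ is a deterministic function of the set of items, applying it to both sides yields $w(A \cap \multiR_{v,g}) = f_v(I(A \cap \multiR_{v,g})) = f_v(I_v^g(A))$ in every possible world. Substituting this back and multiplying by $n$ gives
\begin{align*}
n\, \underset{v,g}{\Exp}\left[w(A \cap \multiR_{v,g})\right]
= \sum_{g \sqsubseteq \multiG} \prob[g] \sum_{v \in V} f_v(I_v^g(A)),
\end{align*}
which is exactly $\xdivscore{A}$ by Equation~(\ref{eq:objPossibWorld}), completing the argument.

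The calculation itself is short; the subtlety to get right — and the only place where one could go wrong — is the distinction between the \emph{distributional} equivalence stated in Lemma~\ref{lemma:actEquivalence} and the \emph{pointwise} identity actually proven within it. Here we are summing $f_v(I_v^g(A))$ against the \emph{same} realization of the randomness ($g$ together with $v$) that also determines $\multiR_{v,g}$, so matching expectations of $f_v$ over two separate distributions would not suffice; what we genuinely need is that the two item sets coincide in each possible world, so that the composed quantities $f_v \circ I_v^g$ and $w$ agree realization-by-realization. I would therefore be explicit that the proof rests on the pointwise equality, and that the remaining manipulation is simply linearity of expectation together with the uniform weighting of the target node.
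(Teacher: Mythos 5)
Your proof is correct, and it takes a somewhat more direct route than the paper's. The paper first fixes $v$, applies the {\sc\large lotus} expansion $\Exp_g[f_v(I_v^g(A))] = \sum_{I \in 2^H} \prob_g(I_v^g(A)=I)\,f_v(I)$, invokes the \emph{distributional} statement of Lemma~\ref{lemma:actEquivalence} to swap $\prob_g(I_v^g(A)=I)$ for $\prob_g(I(A\cap\multiR_{v,g})=I)$, and only then averages over the uniform choice of $v$ to produce the factor $n$. You instead go straight to the pointwise set identity $I_v^g(A)=I(A\cap\multiR_{v,g})$ --- which is indeed what the first display in the proof of Lemma~\ref{lemma:actEquivalence} establishes, realization by realization --- compose with $f_v$, and finish by linearity of expectation; this skips the detour through the sum over $2^H$ outcomes and is arguably cleaner. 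One small caveat on your closing remark: the distributional version \emph{would} in fact suffice here, because Lemma~\ref{lemma:actEquivalence} is stated for each fixed $v$, so one can condition on $v$, note that $f_v$ is then a deterministic function of the random item set, and conclude $\Exp_g[f_v(I_v^g(A))]=\Exp_g[f_v(I(A\cap\multiR_{v,g}))]$ before averaging over $v$ --- this is exactly the paper's argument. So your insistence that only the pointwise identity works is slightly too strong, but it does not affect the validity of your proof, which is sound as written.
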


\begin{proof}
First, notice that over the randomness in $g$,
$f_v(I_v^g(A))$ is a function of a random variable $I_v^g(A)$, hence,
by the {\sc\large lotus} theorem~\cite{ross1970applied},
which defines expectation for functions of random variables,
its expectation can be computed as
\begin{align}
\label{eq:lotus}
\underset{g}{\Exp} \left[f_v(I_v^g(A))\right] =
\sum_{I \in 2^{H}} \underset{g}{\prob}\left(I_v^g(A) = I \right) \, f_v(I).
\end{align}
Then, by Equation~(\ref{eq:lotus}) and the activation equivalence property
shown in Lemma~\ref{lemma:actEquivalence}, we have
\begin{eqnarray*}
\xdivscore{A}
&=& \underset{g}{\Exp}\left[\sum_{v \in V} f_v(I_v^g(A))\right] \\
&=& \sum_{v \in V} \underset{g}{\Exp} \left[f_v(I_v^g(A))\right]\\
&=& \sum_{v \in V} \sum_{I \in 2^H} \prob_g\left(I_v^g(A) = I \right) \, f_v(I) \\
&=& n \, \sum_{I \in 2^{H}} \prob_{v,g}(I(A \cap \multiR_{v,g}) = I)  \, f_v(I) \\
&=& n \, \underset{v,g} {\Exp} \left[f_v(I(A \cap \multiR_{v,g})) \right].
\end{eqnarray*}
\end{proof}
Lemma~\ref{lemma:RCEstimation} shows that we can estimate $\xdivscore{A}$
by estimating $n \, {\Exp} \left[f_v(I(A \cap \multiR_{v,g})) \right]$ on a set of random {\rc}-sets.
This suggests that if we have a sample $\sampleR$ of random {\rc}-sets from which we can obtain,
with high probability,
accurate estimations of $\xdivscore{A}$ for every assignment $A$
such that $\lvert A \rvert \le k$, then,
we can accurately solve Problem~\ref{pr:diversityMax}
on the sample $\sampleR$ with high probability, as we show next.

Given a sample $\sampleR$ of random {\rc}-sets, let
\[
\mathcal{W}_{\sampleR}(A) =
\frac{\sum_{\multiR_{v,g} \in \sampleR} w(A \cap \multiR_{v,g})}{\lvert \sampleR \rvert},
\]
denote the diversity exposure weight of $A$ on the sample.
Notice that, as a direct consequence of Lemma~\ref{lemma:RCEstimation},
the quantity $n \, \mathcal{W}_{\sampleR}(A)$ is an unbiased estimator of $\xdivscore{A}$.

Moreover, let
\[
\mathcal{W}_{\sampleR}((u,i) \mid A) = \mathcal{W}_{\sampleR}(A \cup \{(u,i)\}) - \mathcal{W}_{\sampleR}(A),
\]
denote the marginal increase in the diversity exposure weight of $A$ if the pair $(u,i)$ is added to $A$.
\begin{algorithm}[t]
\caption{\fastAlgo$(\multiG, k, l, \epsilon, \ell)$}
\label{alg:tdem}
\Indm
{\small
\SetKwInOut{Input}{Input}
\SetKwInOut{Output}{Output}
\SetKwComment{tcp}{//}{}
}
\Indp
\BlankLine
$\sampleR \gets \mathrm{Sampling}(\multiG, k, \epsilon, \ell)$ \\
$\approxgrA \gets \rcGreedy(\sampleR, k, l)$ \\
{\bf return} $\approxgrA$
\end{algorithm}

\begin{algorithm}[t]
\caption{$\rcGreedy(\sampleR, k, l)$}
\label{alg:rcGreedy}
\Indm
{\small
\SetKwInOut{Input}{Input}
\SetKwInOut{Output}{Output}
\SetKwComment{tcp}{//}{}
}
\Indp
\BlankLine
$\approxgrA \gets \emptyset$ \\
\While{$\lvert \approxgrA \rvert \le k$} {
		$(u^*, i^*) \leftarrow {\argmax_{(u,i)}} \mathcal{W}_{\sampleR}((u,i) \mid \approxgrA)$, subject to: \ $\lvert \{i: (u,i) \in \approxgrA \} \rvert \le k_u$ \\
		$\approxgrA \leftarrow \approxgrA \cup \{(u^*,i^*)\}$ \\
}
{\bf return} $\approxgrA$
\end{algorithm}

\subsection{\fastAlgoLong}
We now present our \fastAlgoLong algorithm (\fastAlgo), 
which provides an approximate solution to Problem~\ref{pr:diversityMax}.
The pseudocode is shown in Algorithm~\ref{alg:tdem}.
As it names suggests, \fastAlgo operates in two phases:
a \emph{sampling} phase and a \emph{greedy pair-selection} phase.
In the sampling phase, a sample $\sampleR$ of random {\rc}-sets is generated (details later).
This sample is provided as input to \rcGreedy (Algorithm~\ref{alg:rcGreedy}),
which greedily selects feasible pairs $(u, i)$ into $\approxgrA$.
The algorithm terminates when $\lvert \approxgrA \rvert = k$ and
it returns $\approxgrA$ as a solution to Problem~\ref{pr:diversityMax}.

\begin{theorem}
\label{theo:rcGreedy}
Assume that the algorithm \rcGreedy receives as input
a sample $\sampleR$ of random {\rc}-sets such that
for any assignment $A$ of size at most $k$ it holds that
\begin{align}\label{eq:theoAccuracy}
\left\lvert n \, \mathcal{W}_{\sampleR}(A) - \xdivscore{A} \right\rvert < \dfrac{\epsilon}{2} \, \optvalue,
\end{align}
with probability at least $1 -n^{-\ell} / \binom{n h}{k}$. Then, \rcGreedy returns a $(\frac{1}{2} - \epsilon)$-approximate solution to Problem~\ref{pr:diversityMax} with probability at least $1 - n^{-\ell}$.  The running time of \rcGreedy is $\bigO(\sum_{\multiR \in \sampleR} \lvert \multiR \rvert)$, that is, linear in the total size of the {\rc}-sets in the sample.
\end{theorem}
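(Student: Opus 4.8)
The plan is to recognize that \rcGreedy is nothing but the standard greedy algorithm applied to the \emph{empirical} objective $\hat{F}(A) := n\,\mathcal{W}_{\sampleR}(A)$ (maximizing $\mathcal{W}_{\sampleR}$ and $\hat{F}$ are equivalent since $n>0$) subject to the same matroid $\calM$ from Lemma~\ref{lem:constraintMatroid}. The first step is to argue that the approximation machinery behind Theorem~\ref{theo:approximation} applies verbatim to $\hat{F}$. Indeed $\hat{F}(A)=\tfrac{n}{\lvert\sampleR\rvert}\sum_{\multiR_{v,g}\in\sampleR} f_v(I(A\cap\multiR_{v,g}))$, and by the identity $I(A\cap\multiR_{v,g})=I_v^g(A)$ established in the proof of Lemma~\ref{lemma:actEquivalence}, each summand equals $f_v(I_v^g(A))$, which is monotone and submodular in $A$ by Lemma~\ref{lemma:submodularity}. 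A nonnegative linear combination preserves these properties, so $\hat{F}$ is monotone, submodular, nonnegative, with $\hat{F}(\emptyset)=0$; the same $1/2$ guarantee then gives $\hat{F}(\approxgrA)\ge\tfrac12\,\hat{F}(\optA)$, using that $\optA$ is feasible and hence dominated by the empirical optimum.

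The crux is the probabilistic step, which must transfer the \emph{per-assignment} accuracy hypothesis into a guarantee holding for the \emph{data-dependent} output $\approxgrA$. I would define the good event $\mathcal{G}$ that the bound~(\ref{eq:theoAccuracy}) holds \emph{simultaneously} for every feasible assignment of size exactly $k$, and take a union bound. By monotonicity of the objective and the termination rule, both $\optA$ and $\approxgrA$ may be taken of size exactly $k$, so restricting to size-$k$ assignments suffices; there are at most $\binom{nh}{k}$ of them, since each is a choice of $k$ pairs from $\calE$ with $\lvert\calE\rvert=nh$. The per-assignment failure probability $n^{-\ell}/\binom{nh}{k}$ is calibrated precisely so that the union bound yields $\prob[\mathcal{G}]\ge 1-n^{-\ell}$. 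The point is that on $\mathcal{G}$ the accuracy bound holds for \emph{whatever} assignment $\approxgrA$ turns out to be, circumventing its dependence on the very sample used to estimate it.

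Conditioned on $\mathcal{G}$, I would chain the inequalities to pass from $\hat{F}$ back to $F=\xdivscore{\cdot}$: applying~(\ref{eq:theoAccuracy}) to $\approxgrA$, then the greedy guarantee, then~(\ref{eq:theoAccuracy}) to $\optA$ (with $\xdivscore{\optA}=\optvalue$), gives $\xdivscore{\approxgrA} > \hat{F}(\approxgrA)-\tfrac{\epsilon}{2}\optvalue \ge \tfrac{1}{2}\hat{F}(\optA)-\tfrac{\epsilon}{2}\optvalue > \tfrac{1}{2}\bigl(\optvalue-\tfrac{\epsilon}{2}\optvalue\bigr)-\tfrac{\epsilon}{2}\optvalue = \bigl(\tfrac{1}{2}-\tfrac{3\epsilon}{4}\bigr)\optvalue \ge \bigl(\tfrac{1}{2}-\epsilon\bigr)\optvalue$. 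Here the diversity assumption ($\exists\,e,e'$ with $\ell(e)\neq\ell(e')$) is what guarantees $\optvalue>0$, so the multiplicative bound is meaningful rather than vacuous.

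Finally, for the running time I would exhibit a linear-time implementation of \rcGreedy. One builds an inverted index mapping each pair $(u,i)$ to the RC-sets it color-reaches (total size $\bigO(\sum_{\multiR\in\sampleR}\lvert\multiR\rvert)$), and stores for each RC-set its current minimum and maximum exposed leaning (initialized to $\ell(v)$). The marginal gain $\mathcal{W}_{\sampleR}((u,i)\mid\approxgrA)$ decomposes over the RC-sets containing $(u,i)$ as the amount by which $\ell(i)$ extends their current leaning range, read off in $\bigO(1)$ per RC-set, and each selection updates only the ranges of RC-sets containing the chosen pair, mirroring the linear-time greedy over \rr-sets. I expect the main obstacle, beyond the union-bound calibration above, to be the careful amortized bookkeeping: arguing that these gain evaluations and range updates, managed with a lazy CELF-style priority queue, process each pair--RC-set incidence only a bounded number of times, so the total work stays proportional to $\sum_{\multiR\in\sampleR}\lvert\multiR\rvert$.
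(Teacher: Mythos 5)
Your proof is correct, and its central step takes a genuinely different (and in fact tighter) route than the paper's. The paper first invokes Theorem~\ref{theo:approximation} to get $\xdivscore{A^G}\ge\tfrac12\optvalue$ for the \emph{true} greedy solution $A^G$, and then argues via a three-way case analysis that the sample-based greedy output $\approxgrA$ satisfies $\xdivscore{\approxgrA}\ge\xdivscore{A^G}-\epsilon\,\optvalue$; the pivotal step in its case (iii) is the claim $n\,\mathcal{W}_{\sampleR}(\approxgrA)\ge n\,\mathcal{W}_{\sampleR}(A^G)$, justified only by $\approxgrA$ being ``the greedy solution obtained on $\sampleR$.'' That claim is delicate: a myopic greedy run on the empirical objective is not guaranteed to empirically dominate an arbitrary competing feasible set such as $A^G$; it is only guaranteed half the empirical optimum. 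You instead apply the Fisher--Nemhauser--Wolsey $1/2$ matroid guarantee directly to the empirical objective $\hat F=n\,\mathcal{W}_{\sampleR}$ (after verifying it is monotone, submodular, and normalized), obtaining $\hat F(\approxgrA)\ge\tfrac12\hat F(\optA)$ since $\optA\in\calF$, and then sandwich with the accuracy bound on both $\approxgrA$ and $\optA$ to get $(\tfrac12-\tfrac{3\epsilon}{4})\optvalue\ge(\tfrac12-\epsilon)\optvalue$. This is the standard \tim/\imm-style argument; it needs only the two instantiations of Equation~(\ref{eq:theoAccuracy}) that are actually licensed by the union bound over $\binom{nh}{k}$ size-$k$ assignments, and it sidesteps the paper's questionable intermediate inequality entirely. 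Your restriction to assignments of size exactly $k$ (justified by monotonicity) is also a cleaner calibration of the union bound than the paper's implicit count of all assignments of size at most $k$. On the running time, your inverted-index implementation with per-RC-set min/max leanings and the observation that each gain evaluation reduces to an $\bigO(1)$ range comparison per pair--RC-set incidence is essentially the paper's weighted-max-coverage argument made concrete; both accounts leave the amortized accounting at the same level of informality, so nothing is lost there.
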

\begin{proof}
First, notice that, $\mathcal{W}_{\sampleR}(\cdot)$ is a linear combination of submodular $f_v(\cdot)$'s, hence is also submodular. Moreover, the activation equivalence property depicted in 
in Lemma~\ref{lemma:actEquivalence} shows that we can approximately solve Problem~\ref{pr:diversityMax} by finding the assignment that maximizes $\mathcal{W}_{\sampleR}(\cdot)$ on a sample $\sampleR$ of {\rc}-sets. Now, let $$\approxgrA^* = \underset{A \in \mathcal{F}}{\argmax}\, \mathcal{W}_{\sampleR}(A).$$ Then by submodularity and monotonicity of $\mathcal{W}_{\sampleR}(\cdot)$, we have
\begin{align*}
\mathcal{W}_{\sampleR}(\approxgrA) \ge \frac{1}{2} \mathcal{W}_{\sampleR}(\approxgrA^*).
\end{align*}
Given that  $\approxgrA^*$ is the optimal solution on the sample, we also have $\mathcal{W}_{\sampleR}(\approxgrA^*) \ge \mathcal{W}_{\sampleR}(A^*)$ where $A^*$ is the optimal solution of Problem~\ref{pr:diversityMax}. Reminding that Eq.\ref{eq:theoAccuracy} holds for any assignment of size at most $k$ w.p. at least $1 -n^{-\ell} / \binom{n h}{k}$, by a union bound, w.p. at least $1 - 1 / n^{\ell}$, we have: 
\begin{align*}
\xdivscore{\approxgrA} &\ge n \, \mathcal{W}_{\sampleR}(\approxgrA) - \dfrac{\epsilon}{2} \, \optvalue, \\
&\ge \frac{1}{2} \mathcal{W}_{\sampleR}(\approxgrA^*)  - \dfrac{\epsilon}{2} \, \optvalue, \\
&\ge \frac{1}{2} \mathcal{W}_{\sampleR}(A^*) - \dfrac{\epsilon}{2} \, \optvalue, \\
&\ge \frac{1}{2}( \xdivscore{A^*} - \dfrac{\epsilon}{2} \, \optvalue) - \dfrac{\epsilon}{2} \, \optvalue, \\
&= \frac{1}{2}(OPT - \dfrac{\epsilon}{2} \, \optvalue) - \dfrac{\epsilon}{2} \, \optvalue, \\
&\ge \frac{1}{2}OPT - \epsilon \, \optvalue. 
\end{align*}

Therefore the result follows.

Now we analyze the running time of \rcGreedy. First, we remind that the running of the greedy algorithm on {\rr} sets, for approximately solving the influence maximization problem, follows from the running time of the maximum cover problem~\cite{tang14}. For the analysis of \rcGreedy, we use a similar reasoning and exploit a connection to the weighted version of the maximum coverage problem. However, we note that our problem does not correspond to the weighted maximum coverage problem since (i) we are interested in the weights of {\rc}-sets even in the case when they have been already covered by a pair $(u,i)$,\!\footnote{We say that a pair $(u,i)$ covers an {\rc}-set $\multiR$ if $(u,i) \in \multiR$} (ii) the weights of the ground set elements (which correspond to {\rc}-sets) dynamically change based on the pairs that already covered them. However, these differences do not affect the running time analysis much. The constant time operation to check whether an {\rc}-set is covered by a pair $(u,i)$ is replaced by the operation of finding the next smaller and next larger labels compared to $l(i)$ from the labels of the items that have previously covered this {\rc}-set. Using binary search, this can be done in logarithmic time. 

Since this operation is independent of the seed node $u$, the number of ``covered" checks performed on each {\rc}-set is upper-bounded by the size of the {\rc}-set, times a logarithmic factor as explained above. Hence, the total running time complexity of \rcGreedy is $\bigO(\sum_{\multiR \in \sampleR} \lvert \multiR \rvert \log (\lvert \multiR \rvert))$. 
\end{proof}

Let $\theta^*$ be the minimum sample size such that Equation~(\ref{eq:theoAccuracy})
holds for all assignments of size at most $k$.
Notice that since the desired estimation accuracy is a function of $\optvalue$,
the value of $\theta^*$ also depends on $\optvalue$, which is unknown and in fact \NPhard to compute.
To circumvent the problem
we follow a similar approach to \tim \cite{tang14} and \imm \cite{tang2015influence}:
we estimate a lower bound
on the value of the optimal solution,
and use it for the determination of the sample size. 
We also generalize the \emph{statistical test} employed by \imm\cite{tang2015influence} for estimating a lower bound when working with random {\rc}-sets.
Note that 
the results from
influence maximization do not carry over to our case, therefore our extension of the technique is non-trivial.

\subsection{Determining the sample size}
Let $\multiR_1, \multiR_2, \ldots, \multiR_{\theta}$
be the sequence of random \rc-sets generated in the sampling phase of \fastAlgo.
For a given assignment $A$, let $w_j$ denote its weight on the \rc-set $\multiR_j$.
Notice that the choices of $v$ and $g$ during the creation of $\multiR_j$
are independent of $\multiR_1, \ldots, \multiR_{j-1}$.
However, as we will see soon, 
the sampling phase of \fastAlgo employs an adaptive procedure,
in which the decision to generate $\multiR_j$ depends on the outcomes of
$\multiR_1, \ldots, \multiR_{j-1}$.
This creates dependencies between the \rc-sets in the sample $\multiR$.
Thus, 
we can only use concentration inequalities that allow dependencies in the sample.
We first introduce the notions that are crucial in our analysis.

\begin{definition}[Martingale]\label{def:Martingale}
A sequence $X_1, X_2, \ldots$ of random variables is a martingale
if and only if $\Exp[\lvert X_j \rvert]  < + \infty$ and
$\Exp[X_j \mid X_1, \ldots, X_{j-1}] = X_{j-1}$ for any $j$.
\end{definition}

We now establish the connections to martingales.
Let $w = \xdivscore{A}/n$.
By Lemma~\ref{lemma:RCEstimation} we have $\Exp[w_j] = w$, for all $j \in [1,\theta]$.
Noting that the choice of $v$ and $g$ during the creation of $\multiR_j$
is independent of $\multiR_1, \ldots, \multiR_{j-1}$, we have 
\[
\Exp[w_j \mid w_1, \ldots, w_{j-1}] = \Exp[w_j] = w.
\]

Let $M_j= \sum_{z=1}^{j}(w_z - w)$, so $\Exp[M_j] = 0$, and
\begin{align*}
\Exp[M_j \!\mid\! M_1, \ldots, M_{j-1}] &= \Exp[M_{j-1} + w_j - w \!\mid\! M_1, \ldots, M_{j-1}] \\
&= M_{j-1} - w + \Exp[w_j \!\mid\! M_1, \ldots, M_{j-1}] \\
&= M_{j-1} - w + \Exp[w_j] \\
&= M_{j-1},
\end{align*}
therefore, 
the sequence $M_1, \ldots, M_{\theta}$ is a martingale.

The following lemma from  Chung and Lu~\cite{chung2006concentration}
shows a con\-cen\-tra\-tion result for martingales,
analogous to Chernoff bounds for 
independent random variables.
\begin{lemma}\label{lemmaMartIneqLiterature}[Theorem $6.1$, \cite{chung2006concentration}]
Let $X_1, X_2, \ldots $ be a martingale,
such that $X_1 \le a$, $\Var[X_1] \le b_1$, $\lvert X_z - X_{z-1} \rvert \le a$ for $z \in [2,j]$, and
\[
\Var[X_z \mid X_1, \ldots, X_{z-1}] \le b_j, \text{ for } z \in [2,j],
\]
where $\Var[\cdot]$ denotes the variance. Then, for any $\gamma > 0$
\begin{align*}
\prob\left(X_j - \Exp[X_j] \ge \gamma \right) \le \exp\left(-\dfrac{\gamma^2}{2(\sum_{z = 1}^{j} b_z + a \gamma /3)} \right)
\end{align*}
\end{lemma}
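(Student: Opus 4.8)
The plan is to prove this as a Bernstein--Freedman-type concentration bound via the exponential-moment (Chernoff) method adapted to martingales. Writing the increments $Y_1 = X_1 - \Exp[X_1]$ and $Y_z = X_z - X_{z-1}$ for $z \ge 2$, the martingale property gives $\Exp[X_j] = \Exp[X_1]$ and $X_j - \Exp[X_j] = \sum_{z=1}^{j} Y_z$, where each increment is (conditionally) mean zero, bounded by $a$ in absolute value, and has conditional variance at most $b_z$. First I would apply the exponential Markov inequality: for any $\theta > 0$, $\prob(X_j - \Exp[X_j] \ge \gamma) \le e^{-\theta\gamma}\,\Exp[e^{\theta(X_j - \Exp[X_j])}]$, reducing the task to controlling the moment-generating function of the summed increments.

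The key technical step is a conditional-MGF bound for a single increment. Using that $x \mapsto (e^x - 1 - x)/x^2$ is nondecreasing, for $Y_z \le a$ and $\theta > 0$ one gets $e^{\theta Y_z} - 1 - \theta Y_z \le (\theta Y_z)^2\,(e^{\theta a} - 1 - \theta a)/(\theta a)^2$; taking conditional expectation and using $\Exp[Y_z \mid \mathcal{F}_{z-1}] = 0$ together with $\Var[Y_z \mid \mathcal{F}_{z-1}] \le b_z$ yields $\Exp[e^{\theta Y_z} \mid \mathcal{F}_{z-1}] \le \exp\!\big(b_z (e^{\theta a} - 1 - \theta a)/a^2\big)$. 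I would then telescope this bound through the filtration --- peeling off the conditional expectation from the innermost factor $e^{\theta Y_j}$ outward and invoking the tower property at each step --- to obtain $\Exp[e^{\theta(X_j - \Exp[X_j])}] \le \exp\!\big((e^{\theta a} - 1 - \theta a)\,B/a^2\big)$, where $B = \sum_{z=1}^{j} b_z$.

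Combining with the Markov step gives $\prob(X_j - \Exp[X_j] \ge \gamma) \le \exp\!\big(-\theta\gamma + (e^{\theta a} - 1 - \theta a)\,B/a^2\big)$, and I would optimize the free parameter by choosing $\theta = \tfrac{1}{a}\ln(1 + a\gamma/B)$. This produces the sharp Bennett form $\exp\!\big(-\tfrac{B}{a^2}[(1+u)\ln(1+u) - u]\big)$ with $u = a\gamma/B$. The last step, and the main obstacle, is the calculus relaxation from Bennett to Bernstein: establishing $(1+u)\ln(1+u) - u \ge \tfrac{u^2}{2(1+u/3)}$ for all $u \ge 0$, after which substituting $u = a\gamma/B$ and simplifying $\tfrac{B}{a^2}\cdot\tfrac{u^2}{2(1+u/3)} = \tfrac{\gamma^2}{2(B + a\gamma/3)}$ gives exactly the claimed bound. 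The only care needed beyond this is the boundary handling of the first increment $Y_1$, where the hypotheses $X_1 \le a$ and $\Var[X_1] \le b_1$ play the role that the bounded-difference and conditional-variance hypotheses play for $z \ge 2$; since the statement is the cited Theorem~6.1 of~\cite{chung2006concentration}, in the paper it would simply be invoked rather than reproved.
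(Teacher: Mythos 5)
The paper does not prove this lemma at all --- it is imported verbatim as Theorem~6.1 of Chung and Lu --- so there is no in-paper argument to compare against; your sketch is a correct reconstruction of the standard proof of that cited result (exponential Markov inequality, the conditional MGF bound via monotonicity of $(e^x-1-x)/x^2$ together with the zero conditional mean and the variance hypothesis, telescoping through the filtration by the tower property, the Bennett-optimal choice $\theta=\tfrac{1}{a}\ln(1+a\gamma/B)$, and the relaxation $(1+u)\ln(1+u)-u\ge u^2/(2(1+u/3))$, which simplifies to exactly the stated exponent). One minor point worth flagging: the $b_j$ in the displayed conditional-variance hypothesis is evidently a typo for $b_z$, and your reading of it as $b_z$ (so that the sum $\sum_{z=1}^{j} b_z$ appearing in the bound makes sense) is the intended one.
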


We now discuss how to use this concentration result for the martingale $M_1, \ldots, M_{\theta}$.
Notice that since $w_j \in [0,1]$ for all $j \in [1, \theta]$,
we have $\lvert M_1 \rvert= \lvert w_1 - w \rvert \le 1$ and
$\lvert M_j - M_{j-1} \rvert \le 1$ for any $j \in [2,\theta]$.
We also have $\Var[M_1] = \Var[w_1]$, and for any $j \in [2,\theta]$
\begin{align*}
\Var[M_j  & \mid M_1, \ldots, M_{j-1}] \\
&=  \Var[M_{j-1} + w_j - w \mid M_1, \ldots, M_{j-1}] \\
&=\Var[w_j \mid M_1, \ldots, M_{j-1}]  \\
&=\Var[w_j].
\end{align*}

Recall that $f_v(I_v^g(A))$ is a function of the
Multinoulli random variable $I_v^g(A)$, hence,
$w(A \cap \multiR_{v,g}) = f_v(I(A \cap \multiR_{v,g}))$.
Based on the {\sc\large lotus} theorem~\cite{ross1970applied} again,
we have
\begin{align*}
\Exp[f_v(I(A \cap \multiR_{v,g}))^2] = \sum_{I \in 2^{H}} \prob_{v,g}(I(A \cap \multiR_{v,g}) = I)  \, (f_v(I))^2.
\end{align*}

Hence, we can bound the variance as follows
\begin{align*}
\Var[f_v(I(A \cap \multiR_{v,g}))] 
& = \Exp[f_v(I(A \cap \multiR_{v,g}))^2] - w^2  \\
&\le \Exp[f_v(I(A \cap \multiR_{v,g}))^2] \\
&= \sum_{I \in 2^{H}} \prob_{v,g}(I(A \cap \multiR_{v,g}) = I) \, (f_v(I))^2 \\
&\le \sum_{I \in 2^{H}} \prob_{v,g}(I(A \cap \multiR_{v,g}) = I) \, f_v(I) \\
&= w,
\end{align*}
where the last inequality follows from the fact that $f_v(\cdot)$ is bounded by $1$.
Therefore, $\Var[w_j] \le w$ for all $j \in [1, \theta]$.
Then, by using Lemma~\ref{lemmaMartIneqLiterature}, for
$M_{\theta}=  \sum_{j=1}^{\theta}(w_j - w)$, with
$\Exp[M_{\theta}] = 0$, $a = 1$, $b_j=w$, for $j  = 2, \ldots ,\theta$, and
$\gamma = \delta \theta w$, we have the following corollary.
\begin{corollary}
\label{corr:ourIneq1}
For any $\delta > 0$,
\begin{align*}
\prob\left( \sum_{j=1}^{\theta} w_j - \theta w \ge \delta \theta w \right) \le \exp\left(-\dfrac{\delta^2}{\frac{2\delta}{3} + 2} \, \theta w \right).
\end{align*}
\end{corollary}
Moreover, for the martingale $-M_1, \ldots, -M_{\theta}$,
we similarly have $a = 1$ and $b_j = w$ for $j  = 1, \ldots ,\theta$.
Note also that $\Exp[-M_{\theta}] = 0$.
Hence, 
for $-M_{\theta}=  \sum_{j=1}^{\theta}(w - w_j)$ and $\gamma = \delta \theta w$ we can obtain:
\begin{corollary}\label{corr:ourIneq2}
For any $\delta > 0$,
\begin{align*}
\prob\left( \sum_{j=1}^{\theta} w_j - \theta w \le -\delta \theta w \right) \le \exp\left(-\dfrac{\delta^2}{\frac{2\delta}{3} + 2} \, \theta  w \right).
\end{align*}
\end{corollary}
We will use these corollaries frequently.
We are now ready to start our analysis.
We first provide a lower bound on the sample size, which depends on $\optvalue$.

\begin{lemma}\label{lemma:thetaLbOpt}
Let $\theta = \lvert \sampleR \rvert$ denote the size of the random \rc-sets
returned by the sampling phase of \fastAlgo.
Suppose that~$\theta$ satisfies
\begin{align}\label{eq:thetaLbOpt}
\theta \ge 4n(\epsilon + 6)\,\dfrac{\ln \binom{n h}{k} + \ell \ln n + \ln 2}{3 \epsilon^2 \, \optvalue}.
\end{align}
Then, for any assignment $A$ of size at most $k$,
the following holds with probability at least $1 - n^{-\ell}/\binom{n h}{k}$
\begin{align}\label{eq:accuracy}
\left\lvert n \, \mathcal{W}_{\sampleR}(A) - \xdivscore{A} \right\rvert < \dfrac{\epsilon}{2} \, \optvalue.
\end{align}
\end{lemma}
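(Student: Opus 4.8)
The plan is to fix a single assignment $A$ with $\lvert A \rvert \le k$, establish the two-sided accuracy bound (\ref{eq:accuracy}) for it with failure probability at most $n^{-\ell}/\binom{nh}{k}$, and leave the union over all candidate assignments to Theorem~\ref{theo:rcGreedy}. Writing $w = \xdivscore{A}/n$ and letting $w_j$ be the weight of $A$ on the $j$-th \rc-set, Lemma~\ref{lemma:RCEstimation} gives $\Exp[w_j]=w$, and since $\mathcal{W}_{\sampleR}(A)=\frac{1}{\theta}\sum_{j=1}^{\theta}w_j$ we have
\[
n\,\mathcal{W}_{\sampleR}(A) - \xdivscore{A} = \frac{n}{\theta}\Bigl(\sum_{j=1}^{\theta} w_j - \theta w\Bigr).
\]
Thus the target event $\lvert n\,\mathcal{W}_{\sampleR}(A)-\xdivscore{A}\rvert \ge \frac{\epsilon}{2}\optvalue$ is exactly the martingale deviation $\lvert \sum_j w_j - \theta w\rvert \ge \frac{\epsilon\theta\,\optvalue}{2n}$, where $M_\theta=\sum_{j}(w_j-w)$ is the martingale constructed just before the statement.

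The next step is to feed this into Corollaries~\ref{corr:ourIneq1} and~\ref{corr:ourIneq2} by choosing the relative deviation $\delta=\frac{\epsilon\,\optvalue}{2\,\xdivscore{A}}$, so that $\delta\theta w = \frac{\epsilon\theta\,\optvalue}{2n}$ matches the absolute deviation above. (If $\xdivscore{A}=0$ then $w=0$, so $w_j=0$ almost surely and (\ref{eq:accuracy}) holds trivially, because the weak leaning assumption forces $\optvalue>0$; hence I may assume $\xdivscore{A}>0$, making $\delta$ well defined.) Applying the upper-tail Corollary~\ref{corr:ourIneq1} and the lower-tail Corollary~\ref{corr:ourIneq2}, each tail is bounded by $\exp\bigl(-\frac{\delta^2}{\frac{4}{3}\delta+4}\,\theta w\bigr)$.

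The crux is simplifying this exponent. Substituting $\delta=\frac{\epsilon\,\optvalue}{2\,\xdivscore{A}}$ and $w=\xdivscore{A}/n$ and clearing denominators gives
\[
\frac{\delta^2}{\frac{4}{3}\delta+4}\,\theta w = \frac{3\,\epsilon^2\,\optvalue^2\,\theta}{4n\,(2\epsilon\,\optvalue + 12\,\xdivscore{A})}.
\]
Here I would use that $A$ is a feasible assignment of size at most $k$, so $\xdivscore{A}\le\optvalue$; this upper-bounds the denominator by $(2\epsilon+12)\optvalue$ and therefore lower-bounds the exponent by $\frac{3\epsilon^2\,\optvalue\,\theta}{4n(2\epsilon+12)}$. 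The sample-size hypothesis (\ref{eq:thetaLbOpt}) is exactly the condition making this quantity at least $\ln 2 + \ln\binom{nh}{k} + \ell\ln n$, so each tail probability is at most $\frac{1}{2}\,n^{-\ell}/\binom{nh}{k}$; a union bound over the two tails yields the claimed $n^{-\ell}/\binom{nh}{k}$.

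The main obstacle is the passage from the \emph{relative} deviation bounds of the corollaries (stated in terms of $\delta\theta w$) to the \emph{absolute} accuracy target $\frac{\epsilon}{2}\optvalue$: the exponent $\frac{\delta^2}{\frac{4}{3}\delta+4}\theta w$ depends on $\xdivscore{A}$, and one must both choose $\delta$ so that the denominators collapse and exploit $\xdivscore{A}\le\optvalue$ to obtain a bound uniform over every candidate $A$. The $\ln 2$ in (\ref{eq:thetaLbOpt}) is precisely what absorbs the two-tailed union bound, while the $(2\epsilon+12)$ factor is what remains after bounding $2\epsilon\,\optvalue+12\,\xdivscore{A}$ by $(2\epsilon+12)\optvalue$.
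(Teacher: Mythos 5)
Your proposal is correct and follows essentially the same route as the paper's own proof: the same choice $\delta = \frac{\epsilon\,\optvalue}{2\,n w_A}$, the same application of Corollaries~\ref{corr:ourIneq1} and~\ref{corr:ourIneq2}, the same use of $n w_A \le \optvalue$ to make the exponent uniform over $A$, and the same role for the $\ln 2$ term in absorbing the two tails. The only (welcome) addition is your explicit handling of the degenerate case $\xdivscore{A}=0$, which the paper leaves implicit.
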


For better readability, we have included the proof of Lemmas \ref{lemma:thetaLbOpt},
\ref{lemma:optLB1}, and \ref{lemma:optLB2} in the Supplementary material. 

As stated in Theorem~\ref{theo:rcGreedy}
the greedy pair selection phase of \fastAlgo requires as input a sample $\sampleR$ of
random \rc-sets such that Equation~(\ref{eq:theoAccuracy}) holds for all assignments of size at most~$k$.
As shown in Lemma~\ref{lemma:thetaLbOpt},
this requirement translates to the lower bound
$\lvert \sampleR \rvert \ge \lambda / \optvalue$, where
\begin{align}\label{eq:lambda}
\lambda = 4n (\epsilon + 6)  \, \dfrac{\ln \binom{n h}{k} + \ell \ln n + \ln 2}{3 \epsilon^2}.
\end{align}
Given that $\optvalue$ is unknown and \NPhard to compute,
our objective is to identify a lower bound on $\optvalue$,
which is as tight as possible,
so as to reduce the computational cost of generating the sample $\multiR$.
To achieve this goal,
we extend the technique introduced by \imm and
we perform a statistical test $B(x)$,
such that if $\optvalue < x$ then $B(x) = \mathtt{false}$ with high probability.
Given that $\optvalue \in (0,n]$ and
using the value of the greedy solution as an indicator of the magnitude of $\optvalue$,
we can identify a lower bound on $\optvalue$ by running the test $B(x)$ on
$\bigO(\log_2 n)$ values of $x$, i.e., $x = n/2, n/4, \ldots, 2$.

We now give details of our sampling algorithm,
which first adaptively estimates a lower bound on the value of $\optvalue$
by employing the statistical test,
and then it keeps generating random {\rc}-sets into $\multiR$
until $\lvert \multiR \rvert \ge \lambda/ \LB$.

The sampling algorithm, pseudocode provided in Algorithm~\ref{alg:sampling}, 
first sets $\sampleR = \emptyset$ and
initializes $\LB$ to a na\"ive lower bound --- which we will explain soon.
Then, it enters a $\mathtt{for}$-loop with at most $\log_2 n$ iterations.
In the $i$-th iteration, the algorithm computes $x = n / 2^i$ and derives
\[
\theta_i  = \frac{(\frac{2\epsilon}{3} + 2) \, \left( \ln \binom{n h}{k} + \ell \ln n + \ln \log_2 n \right)}{\epsilon^2}  \, \frac{n}{x}.
\]
Then the Algorithm inserts more random {\rc}-sets into $\sampleR$
until $\lvert \sampleR \rvert \ge \theta_i$ and
invokes \rcGreedy (Algorithm~\ref{alg:rcGreedy}).
If $\sampleR$ satisfies the following \emph{stopping condition}
\begin{align}\label{eq:stoppingCond}
n \, \mathcal{W}_{\sampleR}(\approxgrA) \ge (1 + \epsilon) \, x,
\end{align}
the algorithm sets the lower bound
$\LB = \frac{n \, \mathcal{W}_{\sampleR}(\approxgrA)}{1 + \epsilon}$ and
terminates the $\mathtt{for}$-loop.
If this is the case, then algorithm generates more random {\rc}-sets into $\multiR$
until $\lvert \multiR \rvert \ge \frac{\lambda}{\LB}$ and returns $\sampleR$.
Otherwise, the algorithm proceeds with the $(i+1)$-th iteration.
If after $\bigO(\log_2 n)$ iterations the algorithm cannot set $\LB$,
then it uses the na\"ive lower bound and generates random {\rc}-sets into $\sampleR$
until $\lvert \sampleR \rvert \ge \lambda / \LBnaive$.
The na\"ive bound $\LBnaive$ corresponds to the value of the minimum possible solution on the input instance for any positive integer $k$,
hence, we set $\LBnaive = 1- \frac{1}{4}\underset{(v,i) \in \calE}{\max~} g_v(\{i\})$.\!\footnote{Notice that this is analogous to \imm's naive lower bound for the influence maximization problem that is equal to $1$.}

\begin{algorithm}[t]
\caption{$\mathrm{Sampling}(\multiG, k, \epsilon, \ell)$}
\label{alg:sampling}
\Indm
{\small
\SetKwInOut{Input}{Input}
\SetKwInOut{Output}{Output}
\SetKwComment{tcp}{//}{}
}
\Indp
\BlankLine
\Indp
{\small
$\multiR \gets \emptyset$ \;
$\LB \gets \LBnaive$ \;
}
\For{$i =1, \ldots, \log_2 n - 1$} {
$x \gets n / 2^{i}$ \;
$\theta_i  = \frac{(\frac{2\epsilon}{3} + 2) \, \left( \ln \binom{n h}{k} + \ell \ln n + \ln \log_2 n \right)}{\epsilon^2}  \, \frac{n}{x}$ \;
\While{$\lvert \sampleR \rvert \le \theta_i$}{
$\sampleR \gets \sampleR \cup \mathrm{GenerateRC\text{-}Set}$\;
}
$\approxgrA_i \gets \rcGreedy(\sampleR, k, l)$ \;
\If{$n \, \mathcal{W}_{\sampleR}(\approxgrA_i) \ge (1 + \epsilon) \, x,$}{
	$\LB \gets \frac{n \, \mathcal{W}_{\sampleR}(\approxgrA)}{1 + \epsilon}$ \;
	{\bf break}\;
}
$\theta \gets \lambda /\LB$\;
}
\While{$\lvert \sampleR \rvert \le \theta$}{
$\sampleR \gets \sampleR \cup \mathrm{GenerateRC\text{-}Set}$ \;
}
{\bf return} $\sampleR$
\end{algorithm}

The following theorem gives the correctness of Algorithm~\ref{alg:sampling}.
\begin{theorem}~\label{theorem:stopCondCorrectness}
With probability at least $1 - n^{-\ell}$, Algorithm~\ref{alg:sampling} returns a sample $\sampleR$ such that $\lvert \sampleR \rvert \ge \lambda / \optvalue$.
\end{theorem}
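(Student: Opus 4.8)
The plan is to reduce the statement to a single high-probability event about the estimated lower bound \LB. Upon termination the algorithm runs its final \texttt{while}-loop, which augments \sampleR until $\lvert\sampleR\rvert \ge \theta = \lambda/\LB$; hence it suffices to show that $\LB \le \optvalue$ with probability at least $1-n^{-\ell}$, since this immediately gives $\lvert\sampleR\rvert \ge \lambda/\LB \ge \lambda/\optvalue$. When the \texttt{for}-loop never triggers the stopping condition, \LB keeps its initial value $\LBnaive=\max_{(u,i)\in\calE}\lvert\ell(u)-\ell(i)\rvert$; the single-pair assignment attaining this maximum already yields $\xdivscore{\cdot}\ge\LBnaive$ (a seed is always exposed), so $\LBnaive\le\optvalue$ holds deterministically and the naive case is trivial. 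It thus remains to control the event that the stopping condition fixes \LB to an overestimate.

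Next I would characterize the failure event. The lower bound is set to $\LB = n\,\mathcal{W}_{\sampleR}(\approxgrA_i)/(1+\epsilon)$ exactly at the first iteration $i$ where $n\,\mathcal{W}_{\sampleR}(\approxgrA_i)\ge(1+\epsilon)x_i$. Consequently $\LB>\optvalue$ is equivalent to $n\,\mathcal{W}_{\sampleR}(\approxgrA_i)>(1+\epsilon)\optvalue$ at that iteration, and combined with the stopping inequality it implies $n\,\mathcal{W}_{\sampleR}(\approxgrA_i)>(1+\epsilon)\max(x_i,\optvalue)$. Because $\approxgrA_i$ is itself a function of the sample, I would remove this dependence by a union bound over the at most $\binom{n h}{k}$ feasible assignments of size at most $k$ and over the at most $\log_2 2n$ iterations. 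So it is enough to bound, for a \emph{fixed} assignment $A$ with $\xdivscore{A}\le\optvalue$ and a fixed iteration $i$, the probability $\prob\big(n\,\mathcal{W}_{\sampleR}(A)>(1+\epsilon)\max(x_i,\optvalue)\big)$ evaluated on the first $\theta_i$ \rc-sets (more sets only sharpen the bound).

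For this single-assignment tail I would invoke the upper-tail martingale bound of Corollary~\ref{corr:ourIneq1}, writing $w_A=\xdivscore{A}/n$ so that $n\,\mathcal{W}_{\sampleR}(A)$ has mean $n w_A$. The worst case is $\xdivscore{A}=\optvalue$, i.e.\ $w_A=\optvalue/n$ (the estimator is then closest to the threshold), which for $x_i\le\optvalue$ gives a multiplicative deviation $\delta=\epsilon$. Substituting $\theta_i=\tfrac{(\frac{4}{3}\epsilon+4)(\ln\binom{n h}{k}+\ell\ln n+\ln\log_2 2n)}{\epsilon^2}\cdot\tfrac{n}{x_i}$ into Corollary~\ref{corr:ourIneq1} makes the exponent collapse to $(\ln\binom{n h}{k}+\ell\ln n+\ln\log_2 2n)\cdot\optvalue/x_i\ge\ln\binom{n h}{k}+\ell\ln n+\ln\log_2 2n$; the constant $\frac{4}{3}\epsilon+4$ and the $n/x_i$ factor are designed precisely so this cancellation occurs. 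Hence the per-assignment failure probability is at most $e^{-(\ln\binom{n h}{k}+\ell\ln n+\ln\log_2 2n)}=1/\big(\binom{n h}{k}\,n^{\ell}\log_2 2n\big)$. The union over the $\binom{n h}{k}$ assignments yields $1/(n^{\ell}\log_2 2n)$ per iteration, and the union over the $\log_2 2n$ iterations yields $n^{-\ell}$, giving $\prob(\LB>\optvalue)\le n^{-\ell}$.

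The main obstacle is the remaining regime $x_i>\optvalue$, where the stopping inequality forces a strictly larger deviation $\delta=(1+\epsilon)x_i/\optvalue-1>\epsilon$ while the factor $\optvalue/x_i<1$ simultaneously shrinks the exponent. I would resolve this by showing the exponent still dominates $\ln\binom{n h}{k}+\ell\ln n+\ln\log_2 2n$: writing $t=x_i/\optvalue>1$, this amounts to verifying $g(\delta)\ge \epsilon^2 t/(\tfrac{4}{3}\epsilon+4)$ with $g(\delta)=\delta^2/(\tfrac{4}{3}\delta+4)$ and $\delta=(1+\epsilon)t-1$. The two sides coincide at $t=1$ (the $\delta=\epsilon$ boundary of the previous case), and a derivative comparison in $t$ shows $g$ grows faster, so the inequality holds for all $t>1$; intuitively, overshooting a larger threshold is only harder, so this regime contributes an even smaller failure probability. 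Combining both regimes closes the union bound and completes the proof.
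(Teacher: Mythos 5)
Your proof is correct and follows essentially the same route as the paper: the reduction to showing $\LB \le \optvalue$, the union bound over the $\binom{nh}{k}$ assignments and the $\log_2 2n$ iterations, and the tail bound from Corollary~\ref{corr:ourIneq1} with the cancellation built into $\theta_i$ are all exactly the paper's argument, with your two regimes $x_i \le \optvalue$ and $x_i > \optvalue$ reproducing the content of Lemmas~\ref{lemma:optLB2} and~\ref{lemma:optLB1}, respectively. The only differences are organizational --- you inline the two lemmas into a single bound on the event $n\,\mathcal{W}_{\sampleR}(\approxgrA_i) > (1+\epsilon)\max(x_i,\optvalue)$ and reduce to the worst case $n w_A = \optvalue$ rather than bounding for general $w_A$ as the paper does --- and both of these steps check out.
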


To prove Theorem~\ref{theorem:stopCondCorrectness}, we first establish the following two~lemmas, 
whose proof can be found in the Supplementary material.

\begin{lemma}\label{lemma:optLB1}
Assume that we invoke algorithm \rcGreedy on a sample $\sampleR$ of $\theta$ random {\rc}-sets such that
\begin{align*}
\theta \ge \frac{(\frac{2\epsilon}{3} + 2) \, \left( \ln \binom{n h}{k} + \ell \ln n + \ln \log_2 n \right)}{\epsilon^2}  \, \frac{n}{x}.
\end{align*}

Let $\approxgrA$ be the solution returned by the \rcGreedy.
If $n \, \mathcal{W}_{\sampleR}(\approxgrA) \ge (1 + \epsilon) \, x$,
then $\optvalue \ge x$ with probability at least $1 - \frac{n^{-\ell}}{\log_2 2n}$.
\end{lemma}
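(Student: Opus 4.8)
The plan is to prove the contrapositive: I assume $\optvalue < x$ and bound the probability that the stopping condition $n\,\mathcal{W}_{\sampleR}(\approxgrA) \ge (1+\epsilon)\,x$ nonetheless holds. Since the returned assignment $\approxgrA$ is itself a function of the random sample, I cannot apply a concentration inequality to it directly; instead I would union-bound over all feasible assignments. The routine \rcGreedy returns a feasible assignment of size $k$, and there are at most $\binom{nh}{k}$ such assignments over the ground set $\calE = V \times H$. Hence it suffices to show that for every \emph{fixed} assignment $A$ of size $k$,
\[
\prob\left(n\,\mathcal{W}_{\sampleR}(A) \ge (1+\epsilon)\,x\right) \le \frac{1}{\binom{nh}{k}\,n^{\ell}\,\log_2 2n},
\]
and then take a union bound: if no fixed assignment overshoots, then in particular $\approxgrA$ does not, and the stopping condition fails.

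For a fixed $A$, I write $w_A = \xdivscore{A}/n$; by Lemma~\ref{lemma:RCEstimation} each per-set weight $w_j$ has mean $w_A$ and $\theta\,\mathcal{W}_{\sampleR}(A) = \sum_{j=1}^{\theta} w_j$. Because $\xdivscore{A} \le \optvalue < x$, we have $w_A < x/n$. I would then rephrase the event $\{n\,\mathcal{W}_{\sampleR}(A)\ge (1+\epsilon)x\}$ as a multiplicative upper deviation of $\sum_j w_j$ above its mean $\theta w_A$, choosing $\delta$ so that $(1+\delta)\theta w_A = (1+\epsilon)\theta x/n$, and apply the martingale upper-tail bound of Corollary~\ref{corr:ourIneq1}.

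The main obstacle is that the resulting bound $\exp\!\big(-\tfrac{\delta^2}{\frac43\delta+4}\,\theta w_A\big)$ depends on the unknown mean $w_A$, of which I only know $w_A < x/n$. I would resolve this with a monotonicity argument showing the bound is largest in the limit $w_A \uparrow x/n$. Substituting $u = (1+\epsilon)(x/n)/w_A \in (1+\epsilon,\infty)$ and using $\frac43(u-1)+4 = \frac43(u+2)$, the exponent equals $\frac34(1+\epsilon)\tfrac{\theta x}{n}\cdot\frac{(u-1)^2}{u(u+2)}$; a short derivative computation shows that $u\mapsto (u-1)^2/(u(u+2))$ is increasing on $(1,\infty)$ (its derivative has numerator $(u-1)(4u+2)>0$). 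Thus the exponent is minimized exactly at $u = 1+\epsilon$, i.e. $w_A = x/n$, where $\delta=\epsilon$, giving the clean worst-case exponent $\frac{\epsilon^2}{\frac43\epsilon+4}\cdot\frac{\theta x}{n}$.

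Finally I would substitute the hypothesis $\theta \ge \theta_i = \frac{(\frac43\epsilon+4)(\ln\binom{nh}{k}+\ell\ln n+\ln\log_2 2n)}{\epsilon^2}\cdot\frac{n}{x}$, for which $\frac{\theta x}{n}$ exactly cancels the prefactor $\frac{\frac43\epsilon+4}{\epsilon^2}$, so the exponent is at least $\ln\binom{nh}{k}+\ell\ln n+\ln\log_2 2n$ and the per-assignment probability is at most $\frac{1}{\binom{nh}{k}\,n^{\ell}\,\log_2 2n}$ (larger $\theta$ only sharpens this, as the exponent is increasing in $\theta$). The union bound over the $\binom{nh}{k}$ size-$k$ assignments then yields a failure probability of at most $n^{-\ell}/\log_2 2n$, establishing the claim. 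The genuinely delicate point is the uniform-in-$w_A$ control via the increasing function above; everything else is bookkeeping around Corollary~\ref{corr:ourIneq1} and the calibrated choice of $\theta_i$.
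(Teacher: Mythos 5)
Your proposal is correct and follows essentially the same route as the paper's proof: the contrapositive, a union bound over all $\binom{nh}{k}$ size-$k$ assignments, and Corollary~\ref{corr:ourIneq1} with $\delta = \tfrac{(1+\epsilon)x}{n w_A}-1$, arriving at the same worst-case exponent $\tfrac{\epsilon^2}{\frac{4}{3}\epsilon+4}\cdot\tfrac{\theta x}{n}$. The only difference is cosmetic: you eliminate the unknown $w_A$ via an explicit monotonicity analysis of $u\mapsto (u-1)^2/(u(u+2))$, whereas the paper uses the two-step chain $\delta>\epsilon$ and $\delta w_A>\epsilon x/n$ to reach the identical bound.
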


\begin{lemma}
\label{lemma:optLB2}
Assume $x$, $\epsilon$, $\sampleR$, and $\approxgrA$ are defined as in Lemma~\ref{lemma:optLB1}.
If $\optvalue \ge x$ then
$n \, \mathcal{W}_{\sampleR}(\approxgrA) \le (1 + \epsilon) \, \optvalue$
with probability at least $1 - \frac{n^{-\ell}}{\log_2 n}$.
\end{lemma}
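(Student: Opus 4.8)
The plan is to prove the claimed upper bound as the mirror image of Lemma~\ref{lemma:optLB1}: there we controlled the \emph{lower} tail of the estimator against a fixed target~$x$, whereas here we must control the \emph{upper} tail against the target $(1+\epsilon)\optvalue$. The difficulty is that $\approxgrA$ is data-dependent --- it is the output of \rcGreedy on $\sampleR$ --- so I cannot apply a concentration inequality to it directly. Instead I will bound the overestimation probability for every \emph{fixed} assignment of $k$ pairs and finish with a union bound over all $\binom{nh}{k}$ such assignments, which in particular covers $\approxgrA$.

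First I would fix an arbitrary feasible assignment $A$ of $k$ pairs and set $w_A = \xdivscore{A}/n$; by monotonicity $\xdivscore{A} \le \optvalue$, hence $w_A \le \optvalue/n$ (the degenerate case $\xdivscore{A}=0$ is trivial, since then $w_j=0$ almost surely and $\mathcal{W}_{\sampleR}(A)=0$). Recalling $\theta\,\mathcal{W}_{\sampleR}(A) = \sum_{j=1}^{\theta} w_j$ with $\Exp[w_j]=w_A$, I would rewrite the bad event $n\,\mathcal{W}_{\sampleR}(A) \ge (1+\epsilon)\optvalue$ as the deviation $\sum_j w_j - \theta w_A \ge \delta\,\theta w_A$ with $\delta = \frac{(1+\epsilon)\optvalue}{\xdivscore{A}} - 1$. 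Since $\xdivscore{A}\le\optvalue$ we get $\delta \ge \epsilon > 0$, so Corollary~\ref{corr:ourIneq1} applies and gives $\prob(\,\cdot\,) \le \exp\!\big(-\frac{\delta^2}{\frac{4}{3}\delta+4}\,\theta w_A\big)$.

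The technical heart is to show $\frac{\delta^2}{\frac{4}{3}\delta+4}\,\theta w_A \ge C$, where $C = \ln\binom{nh}{k} + \ell\ln n + \ln\log_2 2n$, using the hypothesis $\theta \ge \frac{(\frac{4}{3}\epsilon+4)\,C}{\epsilon^2}\cdot\frac{n}{x}$ together with $\optvalue \ge x$. The hard part will be that this quantity is \emph{not} monotone in the one free parameter $w_A$: shrinking $w_A$ decreases the factor $\theta w_A$ but increases $\delta$, so the two effects compete and the bound must be verified uniformly over $w_A \in (0,\optvalue/n]$. I plan to make the dependence explicit by substituting $s = n w_A/\optvalue \in (0,1]$, which turns $\delta$ into $(1+\epsilon-s)/s$ and, after plugging in the lower bound on $\theta$ and $\optvalue \ge x$, reduces the entire claim to the single-variable inequality $(\epsilon+3)(1+\epsilon-s)^2 \ge \epsilon^2(1+\epsilon+2s)$. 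The left-hand side is an upward parabola in $s$ whose derivative is negative on $(0,1]$, so the expression is minimized at $s=1$; there it holds with \emph{equality} --- the bound is tight exactly when $A$ is an optimal assignment --- which establishes the inequality for all feasible $A$.

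Combining, each fixed assignment overshoots with probability at most $e^{-C} = \frac{n^{-\ell}}{\binom{nh}{k}\log_2 2n}$. A union bound over all $\binom{nh}{k}$ assignments of size at most $k$ then shows that, with probability at least $1 - \frac{n^{-\ell}}{\log_2 2n}$, \emph{no} feasible assignment overshoots, so in particular $n\,\mathcal{W}_{\sampleR}(\approxgrA) \le (1+\epsilon)\optvalue$, as claimed. The only delicate bookkeeping is confirming that the $\ln\log_2 2n$ summand in $C$ is precisely what absorbs the $\log_2 2n$ factor from the per-iteration budget, yielding the per-iteration guarantee $1 - n^{-\ell}/\log_2 2n$ that the final sampling theorem will later union-bound across iterations.
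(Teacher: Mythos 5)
Your proof is correct and follows essentially the same route as the paper's: fix an arbitrary assignment of $k$ pairs, apply Corollary~\ref{corr:ourIneq1} to its upper tail, show the exponent dominates $\ln\binom{nh}{k}+\ell\ln n+\ln\log_2 2n$ using the lower bound on $\theta$ together with $\optvalue\ge x$, and union-bound over all $\binom{nh}{k}$ assignments so that the data-dependent $\approxgrA$ is covered. The only difference is bookkeeping: the paper first relaxes the deviation threshold to $\epsilon\,\optvalue\,\theta/n$ (taking $\delta=\epsilon\,\optvalue/(n w_A)$ and invoking $n w_A\le\optvalue$ once more inside the exponent), whereas you keep the exact $\delta=(1+\epsilon)\optvalue/\xdivscore{A}-1$ and verify the resulting single-variable inequality in $s=n w_A/\optvalue$; both computations close, with yours tight exactly at $s=1$.
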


We are now ready prove Theorem~\ref{theorem:stopCondCorrectness}.

\begin{proof}[Proof of Theorem~\ref{theorem:stopCondCorrectness}]
Let $i^* = \lceil \log_2 \frac{n}{\optvalue} \rceil$.
We will first show that the probability the stopping condition holds
while $\optvalue < x$ is at most $(i^* -1)/(n^{\ell} \log_2 n)$.
Recall that the value of $x$ is determined by $n/2^i$ at each iteration $i$.
Therefore, for any $i < i^*$, we have $x = n/2^i < \optvalue$. Hence, by Lemma~\ref{lemma:optLB1} and the union bound over $i^* - 1$ iterations, the probability that $\optvalue < x$ and  $n \, \mathcal{W}_{\sampleR}(\approxgrA)/(1 + \epsilon) \ge x$  is at most $(i^*-1)/(n^{\ell} \log_2 n)$. Moreover, it follows from Lemma~\ref{lemma:optLB2} that the probability that $\optvalue \ge x$ and $n \, \mathcal{W}_{\sampleR}(\approxgrA) > (1 + \epsilon) \, \optvalue$  is at most $1/(n^{\ell} \log_2 n)$. Hence, when the stopping condition holds, by union bound, the probability that $\optvalue \ge x$ and $n \, \mathcal{W}_{\sampleR}(\approxgrA) \le (1 + \epsilon) \, \optvalue$ is at least 
\[1 - \left(\dfrac{i^*-1}{n^{\ell} \log_2 n} +  \dfrac{1}{n^{\ell} \log_2 n} \right) \ge 1 - n^{-\ell}.\] 
Then by  Lemma~\ref{lemma:optLB2} and the union bound, it follows that with probability at least $1 - n^{-\ell}$, we have
\[
\optvalue \ge \frac{n \, \mathcal{W}_{\sampleR}(\approxgrA)}{1 + \epsilon} \ge x.
\]
Therefore, the algorithm sets $\LB \ge \optvalue$ with probability at least $1 - n^{-\ell}$ and returns a sample $\sampleR$ such that
\begin{align*}
\lvert \sampleR \rvert \ge \frac{\lambda}{\LB} \ge \frac{\lambda}{\optvalue}
\end{align*}
with probability at least $1 - n^{-\ell}$.
\end{proof}

\section{Experiments}
\label{sec:experiments}
In this section, we evaluate our proposed algorithm on a range of real-world datasets.

\begin{table*}[ptb]
\centering
\caption{Statistics of the datasets.}
\label{tab:stats}
\begin{tabular}{@{\hspace{1ex}}l@{}l@{\hspace{6ex}}r@{\hspace{3ex}}r@{\hspace{3ex}}r@{\hspace{4ex}}r@{\hspace{3ex}}r@{\hspace{4ex}}r@{\hspace{3ex}}r@{\hspace{3ex}}r@{\hspace{1ex}}}
\toprule
Dataset &  & $\nbN{}$ & $\nbE{}$ & $\density{}$ & \multicolumn{1}{c}{$\ell$} & \multicolumn{1}{c}{$\ell^2$} & \multicolumn{3}{c}{$p^i_{uv}$} \\
 \cline{8-10} \\ [-.8em]
 &  & & & & $\avg$ & $\avg$ & $\min$ & $\avg$ & $\max$ \\
\midrule
\dtstDBLPbs{} &  & $167$ & $634$ & $3.80$ & $-0.60$ & $0.50$ & $0.034$ & $0.116$ & $0.249$ \\
\dtstDBLPcp{} &  & $144$ & $800$ & $5.56$ & $-0.26$ & $0.28$ & $0.034$ & $0.117$ & $0.247$ \\
\dtstDBLPpy{} &  & $342$ & $1964$ & $5.74$ & $-0.52$ & $0.42$ & $0.034$ & $0.118$ & $0.249$ \\
\dtstTWEPc{} &  & $140$ & $1372$ & $9.80$ & $-0.03$ & $0.34$ & $0.034$ & $0.112$ & $0.249$ \\
\dtstTWEPb{} &  & $338$ & $8436$ & $24.96$ & $-0.07$ & $0.43$ & $0.034$ & $0.107$ & $0.249$ \\
\dtstTWEPa{} &  & $577$ & $24427$ & $42.33$ & $0.12$ & $0.36$ & $0.034$ & $0.113$ & $0.250$ \\
\dtstTWTESFive{} &  & $2719$ & $7714$ & $2.84$ & $0.24$ & $0.52$ & $0.034$ & $0.114$ & $0.249$ \\
\dtstTWTESTwo{} &  & $4379$ & $27765$ & $6.34$ & $0.26$ & $0.52$ & $0.034$ & $0.113$ & $0.249$ \\
\dtstTWTEMFive{} &  & $5183$ & $50165$ & $9.68$ & $0.26$ & $0.51$ & $0.034$ & $0.113$ & $0.250$ \\
\dtstTWTS{} &  & $5454$ & $835725$ & $153.23$ & $0.27$ & $0.52$ & $0.034$ & $0.116$ & $0.250$ \\
\dtstBrexit{} &  & $22745$ & $48830$ & $2.15$ & $0.65$ & $0.72$ & $0.010$ & $0.014$ & $0.110$ \\
\dtstPhone{} &  & $36742$ & $49248$ & $1.34$ & $0.87$ & $0.90$ & $0.010$ & $0.053$ & $1.000$ \\
\dtstUselect{} &  & $23816$ & $844700$ & $35.47$ & $0.46$ & $0.75$ & $0.010$ & $0.013$ & $0.043$ \\
\dtstAbort{} &  & $279505$ & $670501$ & $2.40$ & $0.02$ & $0.80$ & $0.010$ & $0.011$ & $0.110$ \\
\dtstFrack{} &  & $374403$ & $1366909$ & $3.65$ & $0.55$ & $0.61$ & $0.010$ & $0.011$ & $0.110$ \\
\dtstObamacare{} &  & $334617$ & $1511670$ & $4.52$ & $0.12$ & $0.61$ & $0.010$ & $0.012$ & $0.110$ \\
\dtstTWTXL{} &  & $481523$ & $52378856$ & $108.78$ & $0.07$ & $0.39$ & $4.2\text{e-}5$ & $0.028$ & $1.000$ \\
\bottomrule
\end{tabular}
\end{table*}

\subsection{Datasets}
\label{exp:datasets}
In our experiments, we use five collections of networks, one based on data from the DBLP bibliographic database, and the other four datasets collected from Twitter. 

The first collection consists of the one-hop egonets of three well-known researchers: B.~Schneiderman (\dtstDBLPbs{}), P.~Yu (\dtstDBLPpy{}) and C.~Papadimitriou (\dtstDBLPcp{}). 
Node leanings are derived from publication-venue information using the method proposed by Galbrun et al.~\cite{galbrun17finding}.

\dtstTWTS{} is the Twitter follower network obtained by Lahoti et al.~\cite{Lahoti2018joint} and \dtstTWTXL{} is a larger variant of this same network. 
For node leanings we use rescaled ideology scores from Barber\'a et al.~\cite{barbera2015tweeting}. 
From the same harvest of tweets as \dtstTWTS{}, 
we construct two additional collections of networks.
The first collection contains the networks \dtstTWEPc{}, \dtstTWEPb{}, and \dtstTWEPa{}.
Each of these networks is obtained by selecting a pair of users who have opposite leanings but share neighbors and extracting the neighborhood.
The second collection contains the networks \dtstTWTESFive{}, \dtstTWTESTwo{}, and \dtstTWTEMFive{}.
Instead of follower-followee relationships, these networks capture actual exchanges of tweets between users, 
with increasing requirements on the strength of the exchanges.

The last collection consists of the six networks from the study of Garimella et al.~\cite{garimella17balancing}:
\dtstAbort{}, \dtstBrexit{}, \dtstFrack{}, \dtstPhone{}, \dtstObamacare{} and \dtstUselect{},
Each network represents a Twitter follower network focused around topics with two opposing sides. 
We obtain node leanings from estimated probabilities of users to retweet content from either of the opposing sides.

Note that solving our problem for $h$ items on a network with $m$ edges requires maintaining in memory a multigraph of $h \times m$ edges, which is analogous to the requirement for solving the standard influence maximization problem on a graph with $h \times m$ edges. Hence, our largest configuration, \dtstTWTXL{} with $25$ items, effectively yields a graph with $52.5\,\text{M} \times  25 \approx 1.3\,\text{B}$ edges and is comparable to \imm's largest dataset of $1.5\,\text{B}$ edges\cite{tang2015influence}. 

For the largest configuration \dtstTWTXL{}, following \imm\cite{tang2015influence}, we use the Weighted-Cascade model~\cite{kempe03} that assigns $p_{u,v}^i = 1/{|N^{in}(v)|}$ for each item and edge to retain comparability. For the rest of the datasets, the propagation probabilities of items along the edges of the network depend on the leaning of the item being propagated and the leanings of the emitting and receiving users. Intuitively, the further away from the leaning of the users, the less likely an item is to be propagated.

We consider an exponential function to model how the propagation probabilities drop as the leaning of the item lies further away from that of the communicating nodes. 
More specifically, we use an exponential function with parameters $\beta$ and $\gamma$:
\[\Phi_{\beta, \gamma}(u, v, i) = \beta \, \exp(-\gamma \, \max(\abs{\ell(u) - \ell(i)}, \abs{\ell(v) - \ell(i)})/2)\;.\]

We set $\beta = 0.25$ for all collections except \dtstClNIPS for which we use the edge probabilities present in the network as values for $\beta$ and add a $0.01$ offset to all resulting values, in order to obtain reasonable propagation probabilities. We experiment with probabilities obtained with the exponential function, letting $\gamma=2$. We compare the propagation probabilities resulting from this function to an exponential function with $\gamma=4$ as well as a linear function. A heatmap of the resulting propagation probabilities can be found in the Supplementary material.

We use $25$ items with leanings evenly spread over the interval $[-1,1]$ as our pool of items in all the setups.
For the smaller datasets, we will look for assignments of size $k=5$ with an attention bound $k_u = 1$, while for larger datasets we use $k=50$ and $k_u = 5$. We set $\epsilon = 0.2$ and $\ell = 1$ in all the experiments following~\cite{tang2015influence}.

Table~\ref{tab:stats} shows the basic statistics of the datasets used in our experiments.
For each dataset, we indicate the number of nodes ($\nbN{}$), the number of edges ($\nbE{}$), the density of the graph ($\density{}=\nbE{}/\nbN{}$), the average node leaning ($\ell$), the squared node leaning ($\ell^2$), as well as the minimum, average and maximum propagation probabilities, over all edges and items in the network ($p^i_{uv}$).

Figures~\ref{tab:app_histDBLP}--\ref{tab:app_histTWT} show histograms of node leanings and leaning differences across the edges of each network from the different collections. 

\begin{figure*}[pbt]
	\centering
	\begin{tikzpicture}[node distance=1.1em]
	\pgfmathsetmacro{\x}{4.7}%
	\node (bNIPS) at ({\x},0.000) {\includegraphics[width=.45\textwidth]{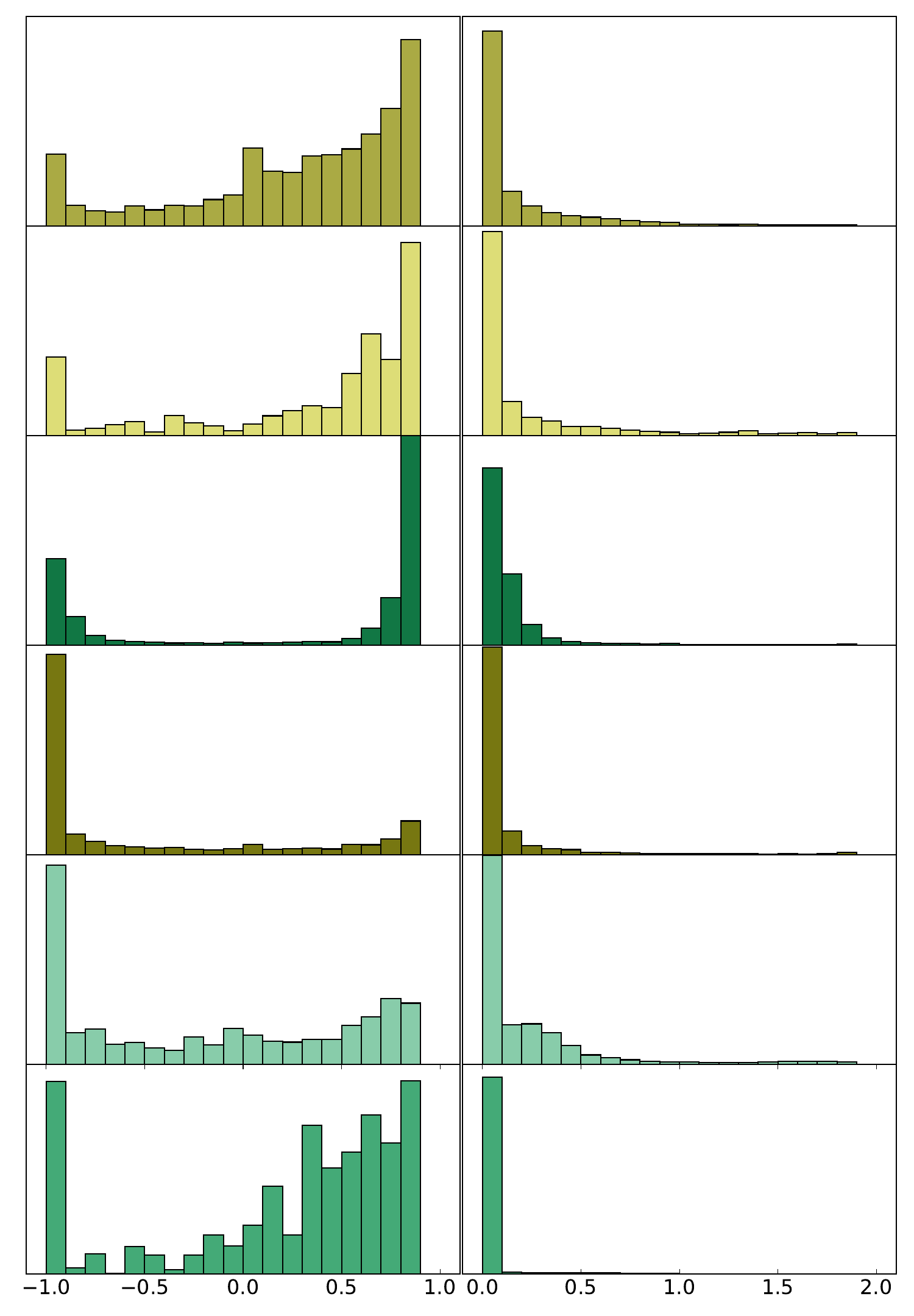}};
	\node[xshift=-.1\textwidth, align=center] (cL) at (bNIPS.south) {$\ell(v)$};
	\node[xshift=.1\textwidth, align=center] (cR) at (bNIPS.south) {$\abs{\ell(v)-\ell(u)}$};
	\node[xshift=-1, rotate=90] (lR) at (bNIPS.west) {};
	\node[above of=lR, align=center, rotate=90, node distance=5cm] (la) {\dtstFrack{}};
	\node[above of=lR, align=center, rotate=90, node distance=3.cm] (la) {\dtstBrexit{}};
	\node[above of=lR, align=center, rotate=90, node distance=1.cm] (la) {\dtstUselect{}};
	\node[below of=lR, align=center, rotate=90, node distance=1.cm] (la) {\dtstAbort{}};
	\node[below of=lR, align=center, rotate=90, node distance=3.cm] (la) {\dtstObamacare{}};
	\node[below of=lR, align=center, rotate=90, node distance=5cm] (la) {\dtstPhone{}};
	
	\node[anchor=south] (bDBLP) at ({-\x}, -.080) {\includegraphics[width=.45\textwidth]{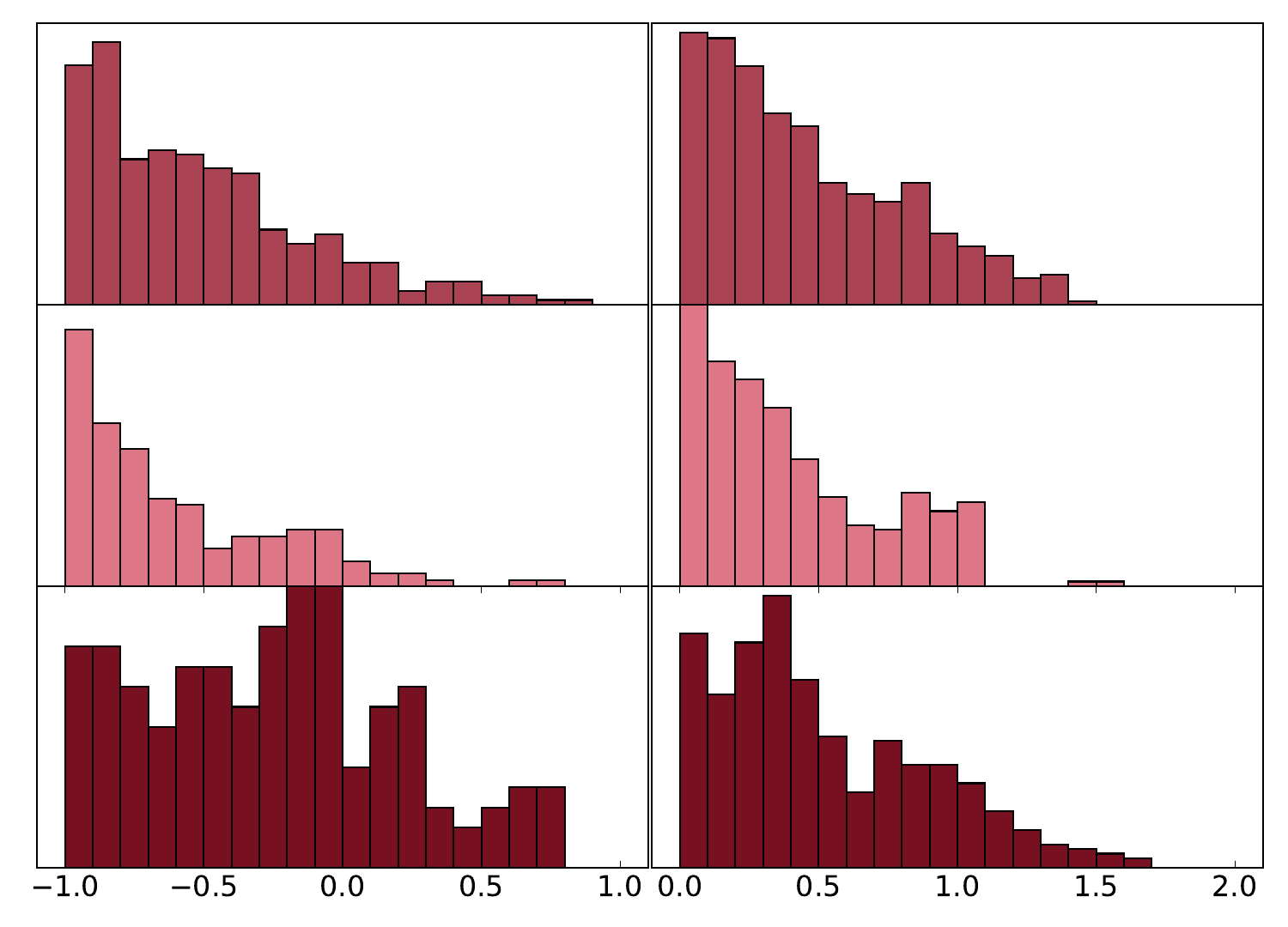}};
	\node[xshift=1, rotate=90] (lR) at (bDBLP.west) {\dtstDBLPbs{}};
	\node[above of=lR, rotate=90, node distance=1.8cm] (la) {\dtstDBLPpy{}};
	\node[below of=lR, rotate=90, node distance=1.8cm] (la) {\dtstDBLPcp{}};
	
	\node[anchor=north] (bUSEP) at ({-\x}, 0.080) {\includegraphics[width=.45\textwidth]{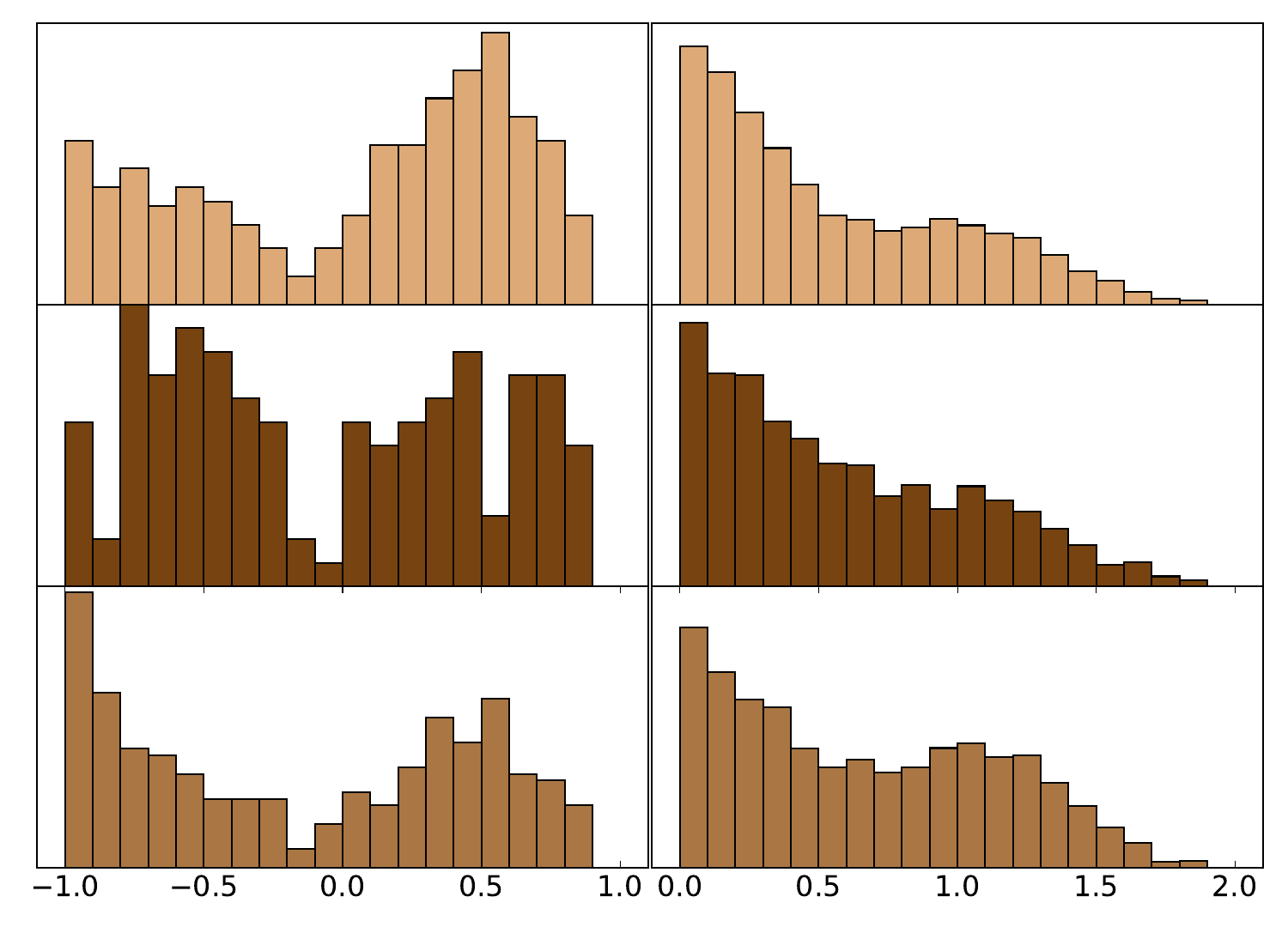}};
	\node[xshift=-.1\textwidth, align=center] (cL) at (bUSEP.south) {$\ell(v)$};
	\node[xshift=.1\textwidth, align=center] (cR) at (bUSEP.south) {$\abs{\ell(v)-\ell(u)}$};
	\node[xshift=1, rotate=90] (lR) at (bUSEP.west) {\dtstTWEPc{}}; 
	\node[above of=lR, rotate=90, node distance=1.8cm] (la) {\dtstTWEPa{}}; 
	\node[below of=lR, rotate=90, node distance=1.8cm] (la) {\dtstTWEPb{}}; 
	\end{tikzpicture}
	\caption{Histograms of node leanings (left) and leaning differences across the edges (right) of \dtstClDBLP{}, \dtstClTWEP{} and \dtstClNIPS{} networks.}
	\label{tab:app_histEP}
	\label{tab:app_histDBLP}
	\label{tab:app_histNIPS}
\end{figure*}

\begin{figure}[tbh]
	\centering
	\begin{tikzpicture}[node distance=1.1em]
	\node[anchor=north] (bUSTE) at (0,0) {\includegraphics[width=.45\textwidth]{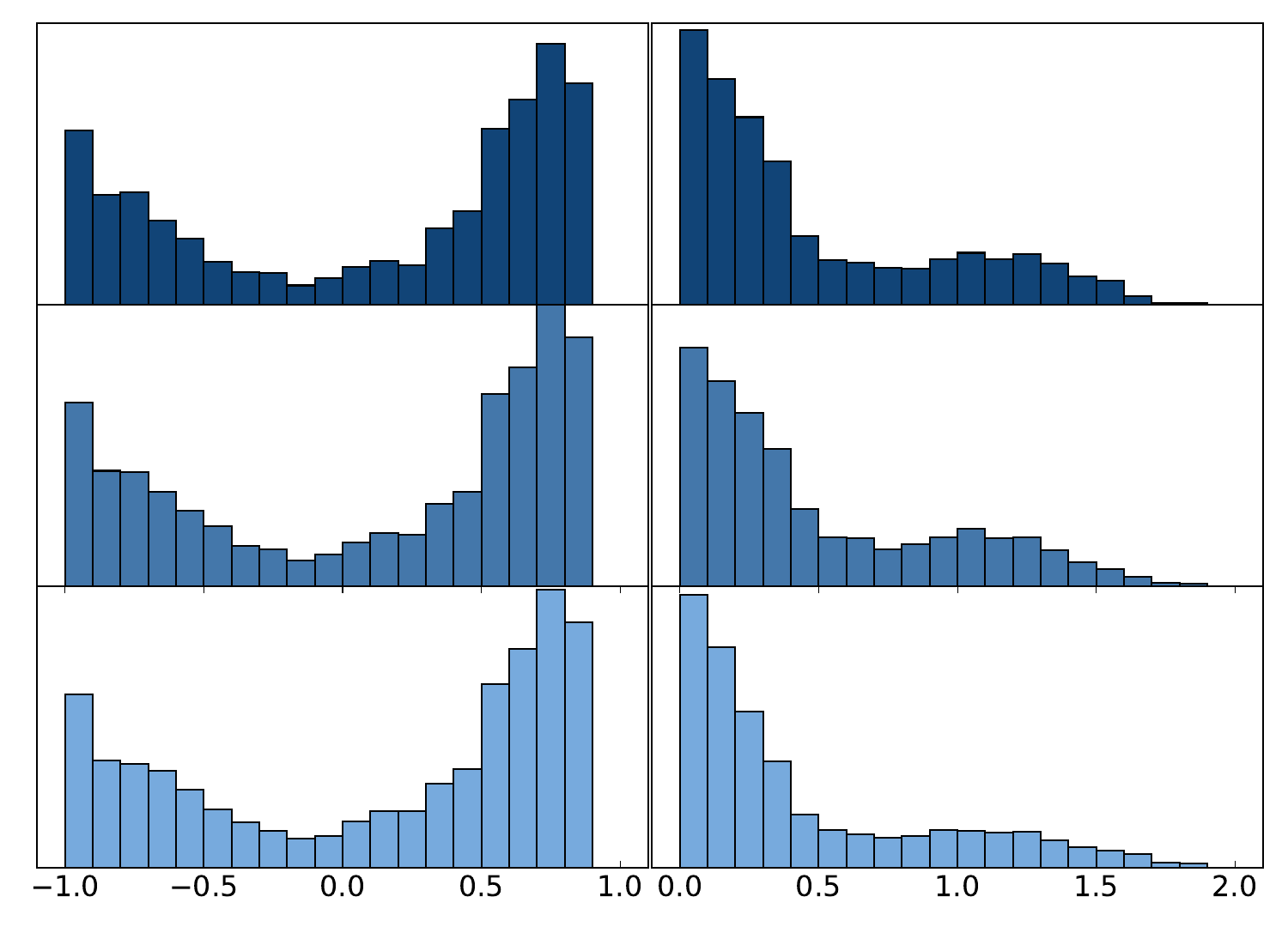}};
	\node[xshift=-.1\textwidth, align=center] (cL) at (bUSTE.south) {$\ell(v)$};
	\node[xshift=.1\textwidth, align=center] (cR) at (bUSTE.south) {$\abs{\ell(v)-\ell(u)}$};
	\node[xshift=1, rotate=90] (lR) at (bUSTE.west) {\dtstTWTESTwo{}};
	\node[above of=lR, rotate=90, node distance=1.8cm] (la) {\dtstTWTESFive{}};
	\node[below of=lR, rotate=90, node distance=1.8cm] (la) {\dtstTWTEMFive{}};
	\end{tikzpicture}
	\caption{Histograms of node leanings (left) and leaning differences across the edges (right) of \dtstClTWTE{} networks.}
	\label{tab:app_histTWTE}
\end{figure}

\begin{figure}[tbh]
\centering
\begin{tikzpicture}[node distance=1.1em]
\node[anchor=north, yshift=-4.15cm] (bTWIT) at (0,0) {\includegraphics[width=.45\textwidth]{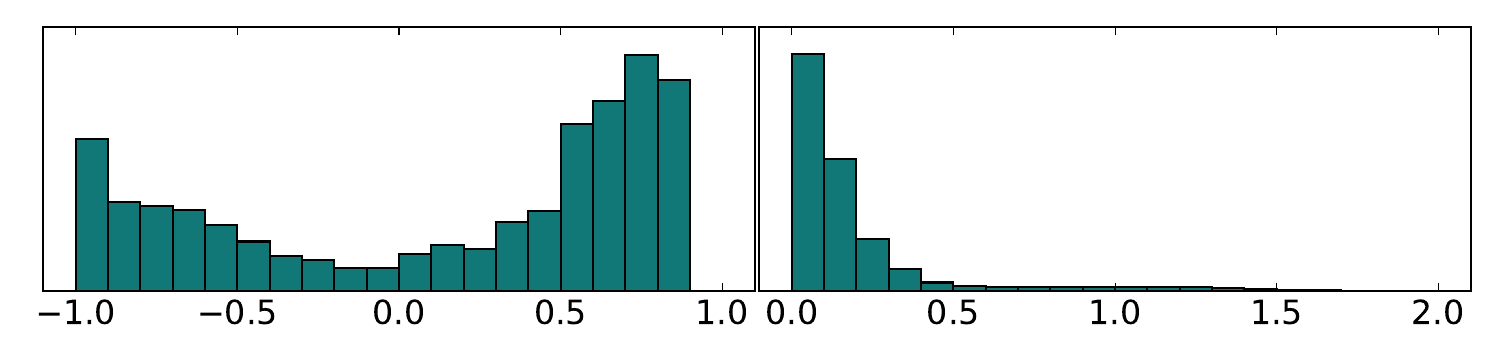}};
\node[xshift=-.1\textwidth, align=center] (cL) at (bTWIT.south) {$\ell(v)$};
\node[xshift=.1\textwidth, align=center] (cR) at (bTWIT.south) {$\abs{\ell(v)-\ell(u)}$};
\node[xshift=1, rotate=90] (lR) at (bTWIT.west) {\dtstTWTS{}};
\end{tikzpicture}
\caption{Histograms of node leanings (left) and leaning differences across the edges (right) of \dtstClTWT{} networks.}
\label{tab:app_histTWT}
\end{figure}

\subsection{Comparison baselines}

To better understand the quality of the returned assignments, 
we compare the solution of our algorithm to item--user assignments obtained by simple yet intuitive baselines. Recall that the running time of \algTdem{} is linear in the total number of generated RC-sets, which is very efficient. In order to not give it an unfair advantage against the comparison baselines, we store the RC-sets computed during the RC-set generation step of \algTdem{}, and use them to also compute the baselines.\!\footnote{Our implementation is publicly available: \url{https://github.com/aslayci/TDEM_extension}}

The first baseline, \algDClose{}, selects at each iteration the highest-degree node $v$ and greedily assigns this node the $k_v$ items that minimize the variance among the items assigned to $v$ so far. When $k_v$ items are assigned to $v$, \algDClose{} repeats the same steps for the next highest-degree node, until a total of $k$ assignments are obtained. The second baseline, \algDFar{}, operates almost identical to \algDClose, with the only difference being the maximization of the variance instead of minimization.  The third baseline, \algDWeighted{}, selects the (next) highest degree node $v$ at each iteration, like the other two baselines, but assigns a set $A_v \in H$ of $k_v$ items, in a greedy fashion, to maximize $f_v(A_v)$.
We also considered a simple baseline that uses fully random assignments. However it performed very poorly, obtaining exposure scores several orders of magnitude smaller than \algTdem{}, so we decided to leave it out.

\begin{table*}[tbp]
\centering
\caption{Results summary: Diversity exposure scores.}
\label{tab:results}
\begin{tabular}{@{\hspace{1ex}}l@{\hspace{6ex}}r@{\hspace{3ex}}r@{\hspace{3ex}}r@{\hspace{4ex}}r@{\hspace{5ex}}r@{\hspace{4ex}}r@{\hspace{1ex}}}
\toprule
  & \multicolumn{4}{c}{$\score$} & Mem.\ & RT \\
\cline{2-5} \\ [-.8em]
Dataset $(k, k_u)$ & \algDWeighted{} & \algDFar{} & \algDClose{} & \algTdem{} & (mb) & (s) \\
\midrule
\dtstDBLPbs{} $(5, 1)$ & \calcnum{22.75/(4*167)} & \calcnum{9.53/(4*167)} & \calcnum{28.01/(4*167)} & \calcnum{33.08/(4*167)} & $457$ & $2.16$ \\
\dtstDBLPcp{} $(5, 1)$ & \calcnum{44.38/(4*144)} & \calcnum{11.19/(4*144)} & \calcnum{40.21/(4*144)} & \calcnum{63.66/(4*144)} & $276$ & $1.8$ \\
\dtstDBLPpy{} $(5, 1)$ & \calcnum{133.85/(4*342)} & \calcnum{25.04/(4*342)} & \calcnum{176.68/(4*342)} & \calcnum{228.08/(4*342)} & $285$ & $2.68$ \\
\dtstTWEPc{} $(5, 1)$ & \calcnum{49.62/(4*140)} & \calcnum{14.56/(4*140)} & \calcnum{40.01/(4*140)} & \calcnum{72.27/(4*140)} & $279$ & $1.97$ \\
\dtstTWEPb{} $(5, 1)$ & \calcnum{236.64/(4*338)} & \calcnum{61.73/(4*338)} & \calcnum{210.25/(4*338)} & \calcnum{262.81/(4*338)} & $174$ & $2.34$ \\
\dtstTWEPa{} $(5, 1)$ & \calcnum{1036.91/(4*577)} & \calcnum{755.21/(4*577)} & \calcnum{810.99/(4*577)} & \calcnum{999.32/(4*577)} & $1\,658$ & $42.92$ \\
\dtstTWTESFive{} $(50, 5)$ & \calcnum{204.09/(4*2719)} & \calcnum{123.14/(4*2719)} & \calcnum{251.70/(4*2719)} & \calcnum{324.51/(4*2719)} & $5\,943$ & $24.12$ \\
\dtstTWTESTwo{} $(50, 5)$ & \calcnum{1531.78/(4*4379)} & \calcnum{375.05/(4*4379)} & \calcnum{2668.32/(4*4379)} & \calcnum{3096.67/(4*4379)} & $656$ & $9.41$ \\
\dtstTWTEMFive{} $(50, 5)$ & \calcnum{3877.27/(4*5183)} & \calcnum{1068.39/(4*5183)} & \calcnum{5297.45/(4*5183)} & \calcnum{6914.75/(4*5183)} & $3\,100$ & $77.47$ \\
\dtstTWTS{} $(50, 5)$ & \calcnum{4408.95/(4*5454)} & \calcnum{4349.02/(4*5454)} & \calcnum{2034.50/(4*5454)} & \calcnum{7041.20/(4*5454)} & $373$ & $44.07$ \\
\dtstBrexit{} $(50, 5)$ & \calcnum{135.72/(4*22745)} & \calcnum{85.02/(4*22745)} & \calcnum{119.36/(4*22745)} & \calcnum{268.14/(4*22745)} & $23\,725$ & $72.49$ \\
\dtstPhone{} $(50, 5)$ & \calcnum{3689.52/(4*36742)} & \calcnum{1032.30/(4*36742)} & \calcnum{2391.89/(4*36742)} & \calcnum{6658.25/(4*36742)} & $1\,803$ & $15.49$ \\
\dtstUselect{} $(50, 5)$ & \calcnum{106.90/(4*23816)} & \calcnum{311.758/(4*23816)} & \calcnum{380.14/(4*23816)} & \calcnum{844.44/(4*23816)} & $45\,828$ & $525.21$ \\
\dtstAbort{} $(50, 5)$ & \calcnum{640.06/(4*279505)} & \calcnum{69.55/(4*279505)} & \calcnum{130.09/(4*279505)} & \calcnum{902.10/(4*279505)} & $154\,588$ & $1\,275.25$ \\
\dtstFrack{} $(50, 5)$ & \calcnum{140.27/(4*374403)} & \calcnum{84.91/(4*374403)} & \calcnum{128.52/(4*374403)} & \calcnum{1143.86/(4*374403)} & $400\,565$ & $4\,785.12$ \\
\dtstObamacare{} $(50, 5)$ & \calcnum{130.95/(4*334617)} & \calcnum{77.97/(4*334617)} & \calcnum{134.02/(4*334617)} & \calcnum{1105.08/(4*334617)} & $360\,449$ & $3\,936.16$ \\
\dtstTWTXL{} $(50, 5)$ & \calcnum{98550.75/(4*481523)} & \calcnum{90019.75/(4*481523)} & \calcnum{65320.25/(4*481523)} & \calcnum{234803.75/(4*481523)} & $3\,438$ & $806.05$ \\
\bottomrule
\end{tabular}
\end{table*}

\subsection{Results}
Table~\ref{tab:results} shows the diversity exposure scores achieved by the three baselines and by our algorithm, \algTdem{}. For easier comparison, we report the average diversity exposure score of the individuals of the social network in each dataset. Recall that the smallest possible value is 0 and the maximum possible value is 1. Additionally, we report \algTdem{}'s memory consumption (in megabytes) and runtime (in seconds). The main computational bottleneck comes from the RC-generation step, which is also used by the baselines. Therefore, we do not report their memory consumption and runtime, since it differs only by a negligible amount to that of Greedy, as the rest of the computations performed by the baselines are trivial.
In summary, \algTdem{} clearly outperforms the simple baselines in terms of the diversity exposure scores obtained. \algTdem{} is able to identify non-trivial assignments that yield optimized diversity exposure in the network. That is, it finds a balance between exposure to diverse opinions yet selects items and nodes that do not have overly extreme leanings so as not to hinder propagation.

Observe that the runtime does not grow in proportion to the size of the network. Instead, it depends on the ability of items to propagate through the network, which depends, in turn, on the particular network structure, distribution of leanings, and propagation probabilities. Indeed, according to Theorem~\ref{theorem:stopCondCorrectness}, the more limited the propagation of items, the more samples are needed to ensure adequate estimation of the spread. Thanks to the use of reverse exposure sets, we obtain a highly efficient algorithm, especially considering that we are dealing with $h$ different influence spread problems, one for each item. 

\section{Conclusions}
\label{sec:conclusions}
In this paper we present the first work tackling the problem of maximizing the diversity of exposure in an item-aware information propagation setting, taking a step towards breaking filter bubbles. Our problem formulation models many aspects of real-life social networks, resulting in a realistic model and a challenging computational problem. Despite the inherent difficulty of the problem, we are able to devise an algorithm that comes with an approximation guarantee, and is very scalable thanks to a novel extension of \emph{reverse-reachable sets}. Through experiments on real-world datasets, we show that our method performs well and scales to large datasets.

Our work opens avenues for future work. One interesting problem is to improve the approximation guarantee of our algorithm by investigating further properties of the matroid formulation. Second, it would be interesting to experiment with different diversity functions, as well as to extend our approach for more complex propagation models such as, in particular, temporal variants of the Independent-Cascade model, with propagation probabilities that change over time.

\IEEEdisplaynontitleabstractindextext

%
\IEEEpeerreviewmaketitle

\ifCLASSOPTIONcaptionsoff
  \newpage
\fi

{\small
\bibliographystyle{IEEEtran}
\bibliography{propagation}
}


\begin{IEEEbiographynophoto}
	{Antonis Matakos} is a PhD student in the Data Mining Group of the Computer Science Department at Aalto University. He obtained his MSc degree from the University of Ioannina, Greece. His research interests belong broadly to the area of algorithmic data mining, with specific focus on social network analysis, graph theory and web mining. His PhD Thesis focuses on proposing social media models and algorithms for social good. 
\end{IEEEbiographynophoto}

\begin{IEEEbiographynophoto}
{Cigdem Aslay} is an assistant professor in the Department of Computer Science at Aarhus University. Previously, she was a postdoctoral researcher at Aalto University and ISI Foundation. She received her PhD from the University of Pompeu Fabra in December 2016. During her PhD, she was hosted by Yahoo! Research in the Web Mining Group. Her work focuses on algorithmic methods for graph mining and social network analysis.
\end{IEEEbiographynophoto}

\begin{IEEEbiographynophoto}
{Esther Galbrun} is a researcher at the School of Computing, University of Eastern Finland, working mostly on data mining methods. She is also a research scientist at Inria, France, currently on leave. She received her PhD in 2013 from the University of Helsinki. In 2018 she was a postdoctoral researcher in the Data Mining Group of the Computer Science Department at Aalto University, where the work for this paper was done.
\end{IEEEbiographynophoto}

\begin{IEEEbiographynophoto}
{Aristides Gionis}
	is a WASP professor in KTH Royal Institute of Technology, 
	an adjunct professor in Aalto University, and 
	a research fellow in ISI Foundation.
	Previously he was a senior research scientist in
	Yahoo!\ Research. He received his PhD from Stanford University in 2003. 
    He is currently serving as an associate editor in DMKD, TKDD, and TWEB.
	His research interests include data mining, web mining, and
	social-network analysis. 
\end{IEEEbiographynophoto}

\vfill
\balance
\end{document}


\maketitle
\label{sec:appendix}

\section{Proof of  Lemma 8}
\begin{proof}
For any given assignment $A$, let\\ $w_A = \xdivscore{A}/n$. Then, we have
\begin{align*}
& \prob\left(\abs[\Big]{n \, \mathcal{W}_{\sampleR}(A) - n \, w_A} \ge \frac{\epsilon}{2} \, \optvalue \right)  \\
&= \prob\left(\abs[\Big]{\theta \, \mathcal{W}_{\sampleR}(A) -\theta w_A} \ge  \, \frac{\theta \epsilon }{2n} \, \optvalue \right).
\end{align*}
By using Corollaries~1 and~2 and letting
$\delta = \dfrac{\epsilon\, \optvalue}{2 n w_A}$ we obtain
\begin{align*}
&\prob\left(\abs[\Big]{\theta \, \mathcal{W}_{\sampleR}(A) - \theta w_A} \ge \delta \theta w_A \right) \\
& \le 2 \exp\left(-\dfrac{\delta^2}{\frac{2\delta}{3} + 2} \, \theta w_A \right)\\
& = 2 \exp\left(-\dfrac{3 \epsilon^2 \, \optvalue^2}{4n ( \epsilon\,  \optvalue + 6 n w_A)}  \, \theta \right) \\
& \le 2 \exp\left(-\dfrac{3 \epsilon^2 \, \optvalue^2}{4n (\epsilon\,  \optvalue + 6 \, \optvalue)}  \, \theta  \right) \\
& = 2 \exp\left(-\dfrac{3 \epsilon^2 \, \optvalue}{4n (\epsilon + 6)}  \, \theta \right),
\end{align*}
where the last inequality above follows from the fact that
for any assignment of size at most $k$ we have $n w_A \le \optvalue$.
Finally, by requiring
\begin{align*}
2 \exp\left(-\dfrac{3 \epsilon^2 \, \optvalue}{4 n (\epsilon + 6)}  \, \theta \right)  \le \dfrac{1}{n^{\ell} \, \binom{n h}{k}},
\end{align*}
we obtain the lower bound on $\theta$.
\end{proof}

\begin{figure*}[tbp]
	\centering
	\begin{tikzpicture}[node distance=.66cm]
	\node[anchor=north,inner sep=0] (image) at (0,0) {\includegraphics[width=.8\textwidth]{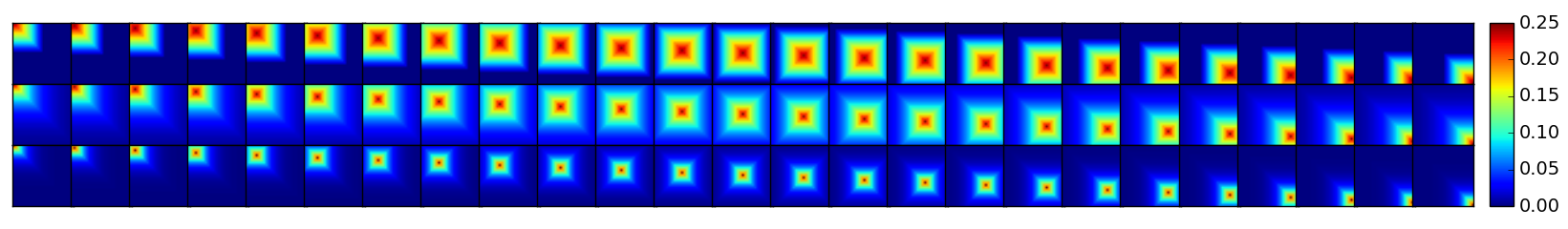}};
	\node (cW) at (image.west) {};
	\node[align=right, anchor=east, xshift=-14] (E2) at (cW) {$\pbMdlExpTwo$};
	\node[below of=E2] (E4) {$\pbMdlExpFour$};
	\node[above of=E2] (L) {$\pbMdlLin$};
	\node (cE) at (image.east) {};
	\node[above of=cE, xshift=-14, yshift=18] {$p^i_{uv}$};
	
	\node[yshift=27, xshift=-10, anchor=east, rotate=90]  (cl) at (cW) {$\ell(v)$};
	\node[yshift=22., xshift=-1, rotate=90, inner sep =1]  (ct) at (cW) {{\scriptsize -1}};
	\node[yshift=9.5, xshift=-1, rotate=90, inner sep =1]  (cb) at (cW) {{\scriptsize 1}};
	\draw[dotted] (ct) -- (cb);
	
	\node[yshift=38, xshift=0, anchor=west]  (kl) at (cW) {$\ell(u)$};
	\node[yshift=29, xshift=6, inner sep =1]  (kt) at (cW) {{\scriptsize -1}};
	\node[yshift=29, xshift=17, inner sep =1]  (kb) at (cW) {{\scriptsize 1}};
	\draw[dotted] (kt) -- (kb);
	
	\node[yshift=-38, xshift=186, anchor=west]  (il) at (cW) {$\ell(i)$};
	\node[yshift=-29, xshift=9, inner sep =1]  (it) at (cW) {{\scriptsize -1}};
	\node[yshift=-29, xshift=-32, inner sep =1]  (ib) at (cE) {{\scriptsize 1}};
	\draw[dotted] (it) -- (ib);
	\end{tikzpicture}
	\caption{Functions for computing the propagation probabilities.}
	\label{fig:app_probFs}
\end{figure*}

\section{Proof of  Lemma 9}
\begin{proof}
	Notice that the implication
	\[
	n \, \mathcal{W}_{\sampleR}(\approxgrA) \ge (1 + \epsilon) \, x \implies \optvalue \ge x
	\]
	is equivalent to
	\[
	\optvalue < x \implies n \, \mathcal{W}_{\sampleR}(\approxgrA) < (1 + \epsilon) \, x.
	\]
	Hence, to prove the lemma, we will show that when $\optvalue < x$,
	the probability that for any arbitrary assignment $A$ of $k$ pairs,
	$n \, \mathcal{W}_{\sampleR}(A) \ge (1 + \epsilon) \, x$ holds,
	is at most $\frac{n^{-\ell}}{\binom{n h}{k} \, \log_2 2n}$.
	Now, assume that $\optvalue < x$ and let $w_A = \xdivscore{A}/n$
	for an arbitrary assignment $A$ of $k$ pairs.
	We have
	\begin{align*}
		& \prob\left( n \, \mathcal{W}_{\sampleR}(A) \ge (1 + \epsilon) \, x \right) \\
		& = \prob\left(\theta \, \mathcal{W}_{\sampleR}(A) \ge \dfrac{(1 + \epsilon) \, x \, \theta}{n} \right) \\
		& = \prob\left(\theta \, \mathcal{W}_{\sampleR}(A) - \theta w_A \ge \dfrac{(1 + \epsilon) \, x \, \theta}{n} -\theta w_A \right) \\
		& = \prob\left(\theta \, \mathcal{W}_{\sampleR}(A) - \theta w_A \ge \left(\dfrac{(1 + \epsilon) \, x }{n \, w_A} -1\right) \, \theta w_A \right).
	\end{align*}
	
	Let $\delta = \left(\dfrac{(1 + \epsilon) \, x }{n \, w_A} -1\right)$.
	Notice that given $\optvalue < x$, we have
	\begin{align}\label{eq:lmLB1}
		w_A = \dfrac{\xdivscore{A}}{n} \le \dfrac{\optvalue}{n} < \dfrac{x}{n}.
	\end{align}
	Moreover, by Equation~(\ref{eq:lmLB1}), we have $\delta > \dfrac{\epsilon \, x}{n \, w_A} > \epsilon$.  Then, by Corollary~1 with $\delta = \left(\dfrac{(1 + \epsilon) \, x }{n \, w_A} -1\right)$, we obtain:
	\begin{align*}
		& \prob\left(\theta \, \mathcal{W}_{\sampleR}(A) - \theta w_A \ge \left(\dfrac{(1 + \epsilon) \, x }{n \, w_A} -1\right) \theta w_A \right) \\
		& \le \exp\left(-\dfrac{\delta^2}{\frac{2\delta}{3} + 2} \, \theta w_A \right) \\
		& = \exp\left(-\dfrac{\delta}{\frac{2}{3} + \frac{2}{\delta}} \, \theta w_A \right) \\
		& \le \exp\left(-\dfrac{\delta}{\frac{2}{3} + \frac{2}{\epsilon}} \, \theta w_A \right) \\
		& = \exp\left(-\dfrac{\epsilon \, x}{n \, w_A \, \left(\frac{2}{3} + \frac{2}{\epsilon} \right)} \, \theta w_A\right) \\
		& = \exp\left(-\dfrac{\epsilon^2}{\left(\frac{2\epsilon}{3} + 2 \right)} \, \dfrac{x}{n} \, \theta \right).
	\end{align*}
	Finally, given that
	\[
	\theta \ge \dfrac{(\frac{2}{3} \epsilon + 2) \, \left( \ln \binom{n h}{k} + \ell \ln n + \ln \log_2 n \right)}{\epsilon^2}  \, \frac{n}{x},\]
	we have
	\begin{align*}
		\exp\left(-\dfrac{\epsilon^2}{\left(\frac{2\epsilon}{3} + 2 \right)} \, \dfrac{x}{n} \, \theta \right) &\le \frac{n^{-\ell}}{\binom{n h}{k} \, \log_2 n}.
	\end{align*}
	
	To conclude, by the union bound,
	if $\optvalue < x$ then we have $n \, \mathcal{W}_{\sampleR}(\approxgrA) < (1 + \epsilon) \, x$
	with probability at least $1 - \frac{n^{-\ell}}{\log_2 n}$.
\end{proof}

\section{Proof of  Lemma 10}
\begin{proof}
	To prove this result we will show that when $\optvalue \ge x$, then
	the probability that \rcGreedy returns an assignment $A$ such that
	$n \, \Exp\left[\mathcal{W}_{\sampleR}(A)\right] > (1 + \epsilon) \,  \optvalue$
	is at most $\frac{n^{-\ell}}{\binom{n h}{k} \, \log_2 n}$.
	Then, by the union bound, if
	$\optvalue \ge x$, we have $n \, \mathcal{W}_{\sampleR}(\approxgrA)\le (1 + \epsilon) \, \optvalue$
	with probability at least $1 - \frac{n^{-\ell}}{\log_2 n}$.
	Recall that, as given by Lemma~9, we have
	\[
	\theta \ge \dfrac{(\frac{2\epsilon}{3} + 2) \, \left( \ln \binom{n h}{k} + \ell \ln n + \ln \log_2 n \right)}{\epsilon^2}  \, \frac{n}{x}.
	\]
	Then we have
	\begin{align*}
		& \prob\left(n \, \mathcal{W}_{\sampleR}(A) > (1 + \epsilon) \, \optvalue \right) \\
		& = \prob\left(\theta \, \mathcal{W}_{\sampleR}(A) >  (1 + \epsilon) \, \optvalue \, \frac{\theta}{n}  \right) \\
		& = \prob\left(\theta \, \mathcal{W}_{\sampleR}(A) - \theta w_A >  (1 + \epsilon) \, \optvalue \, \frac{\theta}{n} - \theta w_A \right) \\
		& \le \prob\left(\theta \, \mathcal{W}_{\sampleR}(A) - \theta w_A \ge  \frac{\epsilon \, \optvalue}{n \, w_A} \, \theta w_A  \right).
	\end{align*}
	
	Then, by Corollary~1 with $\delta =  \frac{\epsilon \, \optvalue}{n \, w_A}$, and using the fact that $\optvalue \ge n \, w_A$, for any $A$ of size $k$, we obtain:
	\begin{align*}
		& \prob\left(\theta \, \mathcal{W}_{\sampleR}(A) - \theta w_A \ge  \frac{\epsilon \, \optvalue}{n \, w_A} \, \theta w_A  \right) \\
		& \le \exp\left(-\dfrac{\delta^2}{\frac{2\delta}{3} + 2} \, \theta w_A \right) \\
		& = \exp\left(- \dfrac{\epsilon^2 \, \optvalue^2 \, 3n}{n^2 \, (6 n \, w_A +  2 \epsilon \, \optvalue)} \, \theta \right) \\
		& \le \exp\left(- \dfrac{3 \, \epsilon^2 \, \optvalue }{n \, (6 +  2\epsilon)} \, \theta \right) \\
		& \le \frac{n^{-\ell}}{\binom{n h}{k} \, \log_2 	n}.
	\end{align*}
\end{proof}

\section{Assignment of IC model probabilities}

{Heat map of the propagation probabilities: In Fig.~\ref{fig:app_probFs} we plot the values of three functions used to compute the probabilities $p^i_{uv}$ for the different combinations of leanings of the nodes and item, $\ell(u)$, $\ell(v)$ and $\ell(i)$.}

{More specifically, we considered a linear function with parameter $\beta$:
\[\Phi_{\text{lin}, \beta}(u, v, i) = \beta \cdot (1-\max(\abs{\ell(u) - \ell(i)}, \abs{\ell(v) - \ell(i)})/2)\;,\]
and an exponential function with parameters $\beta$ and $\gamma$:
\[\Phi_{\text{exp}, \beta, \gamma}(u, v, i) = \beta \cdot \exp(-\gamma \cdot \max(\abs{\ell(u) - \ell(i)}, \abs{\ell(v) - \ell(i)})/2)\;.\]}

{We set $\beta = 0.25$ and for the exponential functions $\gamma=2$ and $\gamma=4$. }

{We observe that for the exponential function with $\gamma=2$ we obtain the most reasonable values. The probability mass is spread more widely than for $\gamma=4$, which decays the probability too much, for diverging leanings. On the other hand, the linear function results in overall higher transmission probabilities. We feel that the exponential function for $\gamma=2$ best explains the filter bubble effect observed in social networks, while still allowing sufficient propagation of the items across the network. }